\documentclass[pra,twocolumn,superscriptaddress,floatfix,amsmath,amssymb,longbibliography]{revtex4-2}
\pdfoutput=1

\usepackage[colorlinks=true]{hyperref}
\usepackage{dcolumn}
\usepackage{bm}
\usepackage{amsmath}
\usepackage{amssymb}
\usepackage{amsthm}
\usepackage{mathtools}
\usepackage{braket} 
\usepackage[dvipdfmx]{color}
\usepackage{graphicx}
\usepackage{longtable}

\usepackage{algorithm}
\usepackage{algpseudocode}

\usepackage[toc,page]{appendix}
\usepackage{makecell}

\def\Q{{\mathbb Q}}

\newcommand{\defeq}{\vcentcolon=}

\newcommand\restr[2]{{
 \left.\kern-\nulldelimiterspace 
 #1 
 \vphantom{\big|} 
 \right|_{#2} 
 }}

\newcommand{\wh}{\widehat}

\newtheorem{theorem}{Theorem}

\begin{document}

\title{A Tailored Fidelity Estimation and Purification Method for Entangled Quantum Networks
\thanks{
This work was supported by JST SPRING, Japan Grant Number JPMJSP2123 and by JST Moonshot R$\&$D Grants JPMJMS2061 and JPMJMS226C. 
}
}
\author{Takafumi Oka}
\author{Michal Hajdu\v{s}ek}
\affiliation{%
Graduate School of Media and Governance, Keio University Shonan Fujisawa Campus, Kanagawa, Japan
}%
\author{Shota Nagayama}
\affiliation{
Graduate School of Media Design, Keio University Hiyoshi Campus, Kanagawa, Japan
}%
\author{Rodney Van Meter}
\affiliation{
Faculty of Environment and Information Studies, Keio University Shonan Fujisawa Campus, Kanagawa, Japan
}%

\begin{abstract}
We present a method to conduct both quantum state reconstruction and entanglement purification simultaneously that is advantageous in several respects over previous work in this direction, showing that the number of Bell pairs necessary to boot a quantum network can be significantly reduced compared to an existing method. The existing method requires at least $10^5$ Bell pairs for the state reconstruction phase to estimate that the state is of fidelity $0.99$ within the error range of $10^{-2}$, whereas our approach only requires around $2,841$ to be certain with $99.7\%$ of confidence that the estimated fidelity lies within $[0.99-0.01, 0.99+0.01]$. In addition, in our approach we can start with a lower fidelity Bell pair and purify it multiple times, estimating at the same time the resultant fidelity with guarantee of $99.7\%$ that the fidelity estimate lies within a certain range. Moreover, the existing method cannot correct both bit-flip and phase-flip errors at the same time and can only correct one of these, whereas our approach can correct both bit-flip and phase-flip errors simultaneously. This research produces numerical estimates for the number of Bell pairs actually needed to guarantee a certain threshold fidelity $F$. The research can support the functioning real-world quantum networking by providing the information of the time needed for the bootstrapping of a quantum network to finish.

\end{abstract}
\maketitle


\section{\label{sec:introduction}Introduction}
Quantum networking and quantum computing depend on extensive use of \textit{Bell pairs}, where a Bell pair is represented as one of the four two-qubit quantum states $\ket{\Phi^+}=\frac{1}{\sqrt{2}}(\ket{00}+\ket{11})$, $\ket{\Psi^+}=\frac{1}{\sqrt{2}}(\ket{01}+\ket{10})$, $\ket {\Psi^-}=\frac{1}{\sqrt{2}}(\ket{01}-\ket{10})$, and $\ket{\Phi^-}=\frac{1}{\sqrt{2}}(\ket{00}-\ket{11})$. In the real-world setting, however, there are errors on the quantum state, and a given Bell pair is imperfect. The state is in a \textit{mixed state}, as opposed to a \textit{pure state}. One quantity that measures the degree to which it is close to the perfect state is \textit{fidelity}, which takes on real values between $0$ and $1$, with $1$ signifying the perfect state, where the given state $\rho$ can be described informally as $\rho=F\ket{\Phi^+}\bra{\Phi^+}+(1-F)(noise)$.

In operational networks, we will have to achieve a threshold fidelity suitable for distributed computation in both data center and wide area interconnects \cite{BARRAL2025100747, PRXQuantum.2.017002}. Thus, we need an online protocol that is accurate enough to make operational decisions on the disposition of a Bell pair at hand that results in one of the following actions: a) purify \cite{PhysRevLett.76.722, PhysRevA.54.3824, PhysRevA.80.042308, 10.1109/JSAC.2024.3380094, Kim_2025, Pan2001, PhysRevA.64.012304} more; b) use for entanglement swapping \cite{PhysRevLett.71.4287} or deliver to application; or c) discard. 

One means to determine the fidelity of a given Bell pair is \textit{state tomography} \cite{ALTEPETER2005105}, which attempts to fully reconstruct a given quantum state. Quantum state tomography itself has been realized and conducted in a lab in a stand-alone fashion~\cite{ALTEPETER2005105, PhysRevA.55.R1561, PhysRevA.64.052312, Qi2017, Guţă_2020}. 
Our prior work \cite{Oka2016} indicates that it is practically infeasible to employ state tomography in the real-time bootstrapping of a quantum network because it needs an excessive number of Bell pairs in order to achieve reasonable accuracy, thereby consuming too much time. Instead, it is necessary to devise and employ a more efficient fidelity estimation method specifically designed for bootstrapping a quantum network. In the first author's bachelor's thesis~\cite{OkaBThesis}, such a method was proposed in the form of a quantum circuit.

\begin{figure}
\includegraphics[keepaspectratio,width=\columnwidth]{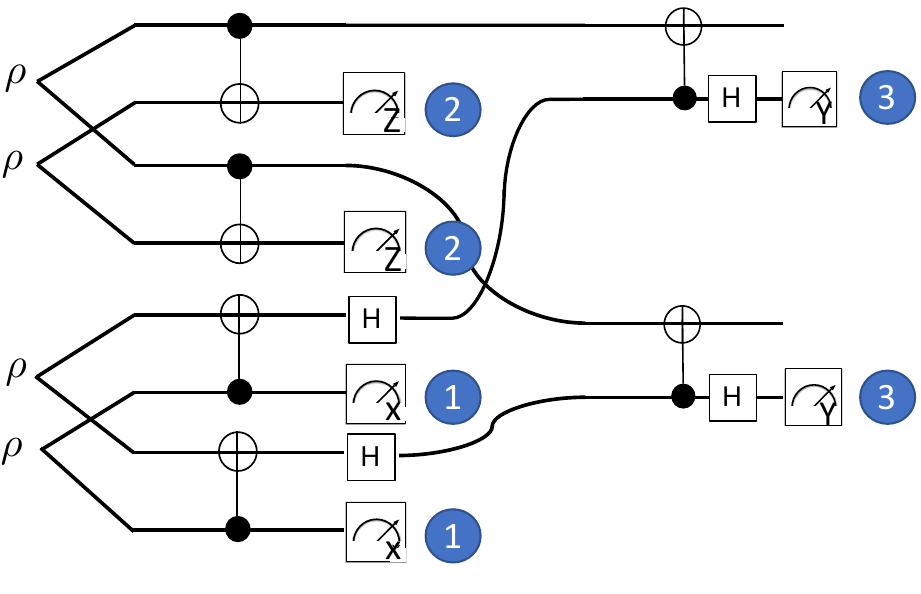}
\caption{\label{fig:the_circuit_with_number}The quantum circuit that realizes both purification and fidelity estimation concurrently. The circled numbers are measurement labels as discussed in the main text.}
\end{figure}

\begin{figure}
\includegraphics[keepaspectratio,scale=0.50]{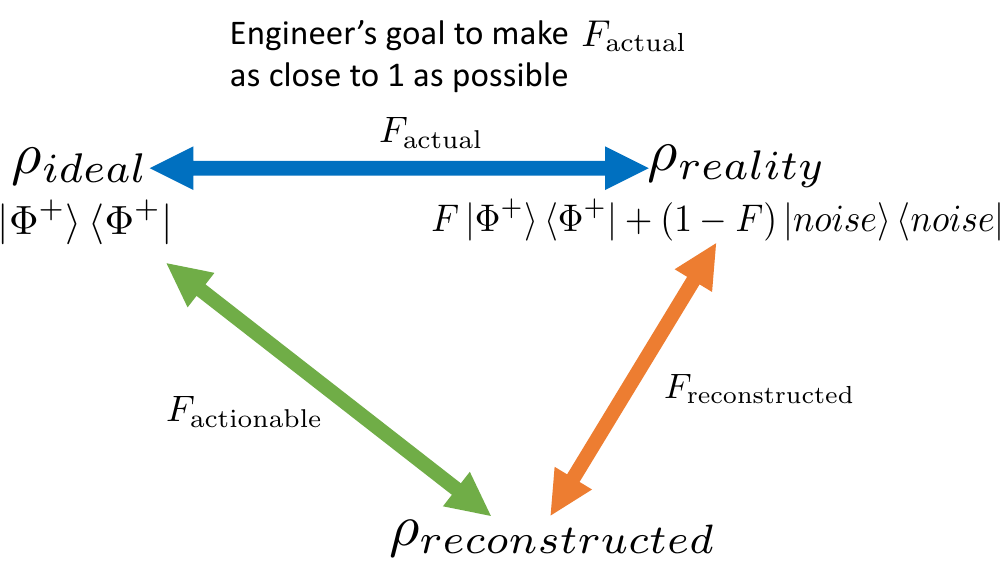}
\caption{\label{fig:OkaTriangle}Structure of the tomography process. Here $\ket{\Phi^+}=\frac{\ket{00}+\ket{11}}{\sqrt{2}}$. It is our goal to make $F_\textrm{reconstructed}$ as close to $1$ as possible. Since we make decisions based on $F_\textrm{actionable}$, when $F_\textrm{reconstructed}$ is low we make bad decisions, affecting the probability of success of application tasks.}
\end{figure}

Note that, in an experimental or operational setup, we only have access to the density matrix as reconstructed from a finite set of measurements. It is crucial to bound the distance between that density matrix, which we refer to as $\rho_{\textrm{reconstructed}}$, and the reality of the apparatus, which we refer to as $\rho_{\textrm{reality}}$, as depicted in Figure~\ref{fig:OkaTriangle}. Since we make decisions based on $F_\textrm{actionable}$, when $F_\textrm{reconstructed}$ is low we make bad decisions. Of particular interest to operational work, the technique of Sugiyama \textit{et al.} attempts to conduct tomography within a prescribed error bound \cite{PhysRevLett.111.160406}.
A related method of fidelity estimation (in contrast to full state tomography) was proposed in \cite{PhysRevLett.106.230501, PhysRevLett.107.210404}. While fidelity estimation yields much less information than state tomography, ``it saves tremendously in measurement and sample complexity'' \cite{Eisert:2020aa}.

Our solution, distinct from stand-alone state tomography, is to incorporate characterization of the state into the process of purification and conduct it in a real-time, distributed fashion. State reconstruction and in particular fidelity estimation becomes part of the process of purification, although with the caveat that the target states are restricted to Bell-diagonal states.  They function collaboratively to produce a high-fidelity Bell pair and guarantee that it is indeed of high fidelity, for a certain threshold fidelity $F$, e.g. allowing us to designate a goal of $F=0.9$ or $F=0.99$.

Prior research in the direction of conducting both purification and fidelity estimation concurrently includes Casapao \textit{et al.}'s work \cite{Casapao:2025aa} and Maity \textit{et al.}'s work \cite{10.1145/3626570.3626594}. However, Maity \textit{et al.}'s work assumes the input states are Werner states, whereas our approach does not impose that assumption. Furthermore, using Casapao \textit{et al.}'s approach one can only correct either of bit-flip and phase-flip errors in the scheme, leaving the other error in the state. Moreover, with Casapao \textit{et al.}'s approach we require at least $10^5$ Bell pairs for the state reconstruction phase to estimate that the state is of fidelity $0.99$ within $99\%$ of confidence. In contrast, in this work we have employed a distinct circuit, which enables correction of both bit-flip and phase-flip errors at the same time as well as estimation of the given state. Moreover, our approach only requires around $2,800$ Bell pairs to be $99.7\%$ certain that the fidelity estimate of the true Werner state of fidelity $F=0.99$ lies within $[0.99-0.01, 0.99+0.01]$, and $16,000$ Bell pairs to be $99.7\%$ certain that the fidelity estimate of the true Werner state of fidelity $F=0.95$ lies within $[0.95-0.01, 0.95+0.01]$. Our approach generally outperforms Casapao \textit{et al.}'s method in the high-fidelity region ($F \geq 0.95$) with moderate level of noise ($\lambda \leq 0.01$). 
Casapao \textit{et al.}'s other work \cite{11250296} estimates quantum states using an entanglement purification protocol called the double selection protocol \cite{PhysRevA.80.042308}, but they assume perfect local gates, whereas we deal with the case in which depolarizing errors are present at each gate. This work also helps us advance from a theoretical framework toward an operational system for our data center testbed.

Moreover, the Quantum Internet is at hand \cite{9951258, rfc9340, Wehner18:eaam9288}; experimentalists have succeeded in creating long-distance quantum entanglement \cite{PhysRevLett.130.213601, doi:10.1126/science.abg1919, doi:10.1126/science.aan3211, Pfaff29052014}. Fidelity estimation of the long-distance entanglement will be of great importance there. 
Although the full, global Quantum Internet has the capabilities to support distributed quantum computing \cite{Cuomo:2020aa}, the Quantum Internet has not yet been realized. By enabling a faster and more efficient delivery of high-fidelity Bell pairs, this work can support distributed quantum computing and realization of the Quantum Internet.

\section{\label{sec:protocol}Protocol}
In this section we describe our protocol for conducting both state reconstruction and purification of a Bell-diagonal state. First we introduce a set of assumptions made throughout the design of the protocol. We assume that each of the quantum gates involved in the circuit suffers from the depolarizing error of depolarizing parameter $\lambda$ and that the input states are all Bell-diagonal. Under these assumptions we conduct purification of the input states along with state reconstruction of the states. We first treat the case when $\lambda=0$, i.e., when there is no depolarizing error at any gate, and after that we treat the general case. \par
Purification is a process where a multiple number of lower-fidelity Bell pairs are consumed to obtain a single higher-fidelity Bell pair. In our method of purification we use the circuit in Figure \ref{fig:the_circuit_with_number}, where four identical low-fidelity Bell pairs are consumed to yield one higher-fidelity Bell pair. Purification is also a probabilistic process and we do not always succeed. Purification \textit{succeeds} with a certain probability $P$ and \textit{fails} with probability $1-P$. Purification success or failure is governed by the measurement outcomes at the measurement locations in the circuit. In our method of purification, purification succeeds if and only if in Figure \ref{fig:the_circuit_with_number} 1) the parity agrees at the measurement locations 1 and 2 and 2) the parity disagrees at the measurement location 3. \par

Our entire protocol is described in the form of pseudocode in Algorithm \ref{pseudocode_protocol}. We first estimate the fidelity using the circuit, and if the estimated fidelity is above the threshold (which we assume to be either $0.9$ or $0.99$ in the discussions below) then we use that Bell pair for further applications. If not, then we let the low-fidelity Bell pairs be passed on to the circuit and use it as the purification circuit to enhance the fidelity of them. We use the purification circuit recursively until the estimated fidelity reaches above the threshold. \par

\begin{algorithm}[H]
\caption{The pseudocode for our purification scheme: how it purifies the input states}
\label{pseudocode_protocol}
\begin{algorithmic}[1]
\Require{The given state $given\_state$ and the given threshold fidelity $given\_threshold$}
\Ensure{The resultant purified state $purified\_state$}
\State $state \gets given\_state$
\State $threshold \gets given\_threshold$
\State $F\_est \gets$ estimate\_actual\_fidelity(state)
\If{$F\_est < threshold$}
        \State $state \gets \text{purify\_the\_state}(state,threshold)$
\EndIf
\State return($state$)
\Function{purify\_the\_state}{$state, threshold$}
\State $F_{state} \gets \text{estimate\_actual\_fidelity}(state)$
\If{$F_{state}>threshold$}
    \State \Return $state$
\Else
    \State purify($state$)
    \State \Return purify\_the\_state($state$, $threshold$)
\EndIf
\EndFunction
\end{algorithmic}
\end{algorithm}

For the performance of the purification, we present the success probability of the purification and the resultant state upon a successful purification. Let 
\begin{equation}\label{eq:rho_0_general}
    \begin{aligned}
\rho_0=&a\ket{\Phi^+}\bra{\Phi^+}+b\ket{\Psi^+}\bra{\Psi^+}\\
+&c\ket{\Psi^-}\bra{\Psi^-}+d\ket{\Phi^-}\bra{\Phi^-}
    \end{aligned}
\end{equation} be the initial state and consider conducting purification to this state. We need to prepare 4 identical copies of the state and input them to the circuit. As we will see later, the probability that the parity of the measurement outcomes at the measurement location 1 agrees is $\widetilde{P_1}=(a+b)^2+(c+d)^2$. Similarly the probability that the parity of the measurement outcomes at the measurement location 2 agrees is $\widetilde{P_2}=(a+d)^2+(b+c)^2$. Finally, the probability that the parity of the measurement outcomes at the measurement location 3 \textit{disagrees} is, using the notation in equations \ref{eq:aa1} through \ref{eq:dd2}, $\widetilde{P_3}=(A_1+B_1)(A_2+C_2)+(C_1+D_1)(B_2+D_2)$. The purification success probability $\widetilde{P}$ is now  
\begin{equation*}
\begin{split}
\widetilde{P}=&\prod_{i=1}^3 \widetilde{P_i}\\=&a^4+a^2 \left(2 b^2+4 b d+(c+d)^2\right)+2 a \left(2 b^2 c+c^2 d+d^3\right)\\
&+b^4+b^2 (c+d)^2+2 b c \left(c^2+d^2\right)+2 c d \left(c^2+d^2\right)
\end{split}
\end{equation*} and the coefficients of the resultant state $\rho_1 =  a_1\ket{\Phi^+}\bra{\Phi^+}+b_1\ket{\Psi^+}\bra{\Psi^+}+c_1\ket{\Psi^-}\bra{\Psi^-}+d_1\ket{\Phi^-}\bra{\Phi^-}$ are as in Table \ref{tab:table1}. 

\begin{table}[]
\caption{\label{tab:table1}%
Coefficients of the Bell-diagonal state $\rho_1$, the state after the initial state $\rho_0$ has undergone purification. Each is expressed as a function of the old coefficients, $a$, $b$, $c$, and $d$.
}
\begin{tabular}{|c|c|}\hline
Coefficient & Value \\ \hline
$a_1$ &  $\frac{\begin{matrix} a^4+a^2 \left(b^2+d^2\right)+b^2 d (2 c+d)+2 c^3 d \end{matrix}}{\begin{matrix} a^4+a^2 \left(2 b^2+4 b d+(c+d)^2\right)+2 a \left(2 b^2 c+c^2 d+d^3\right)&\\+b^4+b^2 (c+d)^2+2 b c \left(c^2+d^2\right)+2 c d \left(c^2+d^2\right)\end{matrix}}$ \\ \hline
$b_1$ &  $\frac{\begin{matrix}a^2 \left(b^2+c (c+2 d)\right)+b^4+b^2 c^2+2 c d^3\end{matrix}}{\begin{matrix} a^4+a^2 \left(2 b^2+4 b d+(c+d)^2\right)+2 a \left(2 b^2 c+c^2 d+d^3\right)&\\+b^4+b^2 (c+d)^2+2 b c \left(c^2+d^2\right)+2 c d \left(c^2+d^2\right)\end{matrix}}$ \\ \hline
$c_1$ &  $\frac{\begin{matrix}2 a \left(2 b^2 c+c^2 d+d^3\right)\end{matrix}}{\begin{matrix} a^4+a^2 \left(2 b^2+4 b d+(c+d)^2\right)+2 a \left(2 b^2 c+c^2 d+d^3\right)&\\+b^4+b^2 (c+d)^2+2 b c \left(c^2+d^2\right)+2 c d \left(c^2+d^2\right)\end{matrix}}$ \\ \hline
$d_1$ &  $\frac{\begin{matrix}2 b (c^3 + 2 a^2 d + c d^2)\end{matrix}}{\begin{matrix} a^4+a^2 \left(2 b^2+4 b d+(c+d)^2\right)+2 a \left(2 b^2 c+c^2 d+d^3\right)&\\+b^4+b^2 (c+d)^2+2 b c \left(c^2+d^2\right)+2 c d \left(c^2+d^2\right)\end{matrix}}$ \\ \hline 
\end{tabular}
\end{table}
For example, if the initial state $\rho_0$ is the Werner state 
\begin{equation}\label{eq:rho_0}
\begin{aligned}
\rho_0=&0.7\ket{\Phi^+}\bra{\Phi^+}+0.1\ket{\Psi^+}\bra{\Psi^+}\\
+&0.1\ket{\Psi^-}\bra{\Psi^-}+0.1\ket{\Phi^-}\bra{\Phi^-},
\end{aligned}
\end{equation} then the purification success probability for the complete circuit in Figure \ref{fig:the_circuit_with_number} is $0.296$ and the coefficients of the resultant state $\rho_1$ are $(a_1,b_1,c_1,d_1)=(0.845946, 0.0675676, 0.0189189, 0.0675676)$. Employing the recursive scheme, the purification protocol can be recursively iterated, in which case the resultant fidelity asymptotically approaches $1$. For example, after the second iteration of the protocol the resultant state $\rho_2$ becomes 
\begin{align*}
    \rho_2 = &0.964084\ket{\Phi^+}\bra{\Phi^+}+0.0100128\ket{\Psi^+}\bra{\Psi^+}\\
    +&0.00158934\ket{\Psi^-}\bra{\Psi^-}+0.0243137\ket{\Phi^-}\bra{\Phi^-}.
\end{align*} The outcomes out of the further iterations of the purification of the Werner state $\rho_0$ are written in Table \ref{tab:tbl_further_puri}. \par

\begin{table}[]
\caption{\label{tab:tbl_further_puri}%
Coefficients of the state $\rho_m$ after the initial Werner state $\rho_0$ as in equation \ref{eq:rho_0}
underwent $m$ purifications, where the recursive scheme is adopted.
}
\begin{tabular}{|c|c|c|c|c|}
\hline
\multicolumn{1}{|l|}{} & $a$        & $b$            & $c$                     & $d$                      \\ \hline
$\rho_0$               & $0.7$ & $0.1$ & $0.1$ & $0.1$ \\ \hline
$\rho_1$               & $0.845946$ & $0.0675676$ & $0.0189189$ & $0.0675676$              \\ \hline
$\rho_2$               & $0.964084$ & $0.0100128$ & $0.00158934$ & $0.0243137$              \\ \hline
$\rho_3$               & $0.998728$ & $0.000193408$ & $0.0000328621$ & $0.0010456$             \\ \hline
$\rho_4$               & $0.999$ & $1.07\times10^{-7}$ & $2.30\times10^{-9}$ & $8.10\times10^{-7}$           \\ \hline
\end{tabular}
\end{table}

When the depolarizing error of depolarizing parameter $\lambda$ is present at each gate, the process of the purification above yields lower fidelity Bell pairs, with the fidelity depending on the value of $\lambda$. For example, consider the case when the original Bell pair is the Werner state $\rho_0$ as in equation \ref{eq:rho_0}
and the depolarizing parameter $\lambda=0.1$. In this case the purification in fact decreases the fidelity of the original Bell pair, where $1$ purification yields the fidelity $F=0.58$, $2$ purifications $F=0.46$, $3$ purifications $F=0.33$, $4$ purifications $F=0.26$, and $5$ purifications $F=0.25$. Therefore the purification is useless for the purpose of improving the fidelity of the original Bell pair in the case $\lambda=0.1$. When $\lambda=0.01$ the situation is better, where $1$ purification yields the fidelity $F=0.81$, $2$ purifications $F=0.92$, $3$ purifications $0.97$, $4$ purifications $0.98$, and $5$ purifications $F=0.98$. In this case the process of purification works, although the degree of improvement is decreased by the presence of the depolarizing error. It will be worth determining the threshold of the depolarizing parameter at which the process of purification actually improves the fidelity of the original Bell pair. We conducted numerical experiments to determine the threshold. It is about $\lambda = 0.053$ for the initial Werner state $F=0.7$ and it is about $\lambda = 0.006$ for the initial Werner state $F=0.99$.

\section{\label{sec:calc_of_rho_reconst}Calculation of $\rho_{\textrm{reconstructed}}$}

In this section, we explain in detail how the fidelity estimation is carried out. Again we first consider the case where the depolarizing parameter $\lambda$ is $0$, i.e., where there is no error at each gate, and proceed to the general case of $\lambda$ being arbitrary. Assume $\lambda=0$, let $\rho_0$ be the state as in equation \ref{eq:rho_0_general} and suppose we want to reconstruct this state.  
Throughout this section, we assume that the initial fidelity, $a$, is greater than or equal to $1/2$. We will collect the measurement outcomes in the Z, X, and Y bases and attempt to determine the coefficients $a$, $b$, $c$ and $d$. Following the circuit in Figure \ref{fig:the_circuit_with_number}, the possible outcomes and corresponding probabilities at the measurement location 1 are as in Table \ref{tab:TblOutcomeSet1}.

\begin{table}[]
\caption{Possible measurement outcomes at the measurement location 1 and their respective probabilities in terms of the coefficients of $\rho_0$.}
\centering
\begin{tabular}{|l|l|}
\hline
Outcomes at 1 & Probability                 \\ \hline
$00$         & $\left\{(a+b)^2+(c+d)^2\right\}/2$ \\ \hline
$01$         & $(a+b)(c+d)$ \\ \hline
$10$         & $(a+b)(c+d)$ \\ \hline
$11$         & $\left\{(a+b)^2+(c+d)^2\right\}/2$ \\ \hline
\end{tabular}
\label{tab:TblOutcomeSet1}
\end{table}

We define $p_1$ to be the proportion of the number of measurement outcomes whose parity agrees to the total number of measurement outcomes at the measurement location 1, i.e., the proportion of getting $00$ or $11$ at the measurement location 1. It is reasonable to expect that $p_1$ converges to $(a+b)^2+(c+d)^2$ as the total number of measurements increases. We use this property of convergence to infer the coefficients $a$, $b$, $c$ and $d$ from the measurement outcomes at the three measurement locations. In fact, let $x=a+b$ to obtain an equation of $x$ in terms of $p_1$:
\begin{equation}
p_1=x^2+(1-x)^2=2x^2-2x+1.
\end{equation} This implies that 
\begin{eqnarray}
a+b&=&\frac{1+\sqrt{2 p_1 -1}}{2}  \label{eq:eqrel1}\\
c+d&=&\frac{1-\sqrt{2 p_1 -1}}{2}  \label{eq:eqrel2}
\end{eqnarray}
 Now we have an estimated value of $a+b$ (and $c+d$) in terms of $p_1$.

Similarly, the possible outcomes and corresponding probabilities at the measurement location 2 are as in Table \ref{tab:TblOutcomeSet2}. Define $p_2$ to be the proportion of the number of measurement outcomes at the measurement location 2 whose parity agrees to the total number of measurement outcomes, i.e., the proportion of getting $00$ or $11$ at the measurement location 2. It is expected that $p_2$ converges to $(a+d)^2+(b+c)^2$ as the total number of measurements increases. In this case let $x=a+d$ to obtain an equation of $x$ in terms of $p_2$: 
\begin{equation}
p_2=x^2+(1-x)^2.
\end{equation} Assuming $a+d \geq 1/2$ (which is satisfied because the initial fidelity, that is, the value of $a$, is assumed to be $\geq 1/2$), solve this equation to obtain the relations
\begin{eqnarray}
  a+d &=& \frac{1+\sqrt{2p_2-1}}{2}, \label{eq:eqrel3} \\ 
  b+c &=& \frac{1-\sqrt{2p_2-1}}{2}. 
\end{eqnarray} Now we have an estimated value of $a+d$ (and $b+c$) in terms of $p_2$. 
\\

\begin{table}[]
\caption{Possible measurement outcomes at the measurement location 2 and their respective probabilities in terms of the coefficients of $\rho_0$.}
\centering
\begin{tabular}{|l|l|}
\hline
Outcomes at 2 & Probability                 \\ \hline
$00$         & $\left\{(a+d)^2+(b+c)^2\right\}/2$ \\ \hline
$01$         & $(a+d)(b+c)$                \\ \hline
$10$         & $(a+d)(b+c)$                \\ \hline
$11$         & $\left\{(a+d)^2+(b+c)^2\right\}/2$ \\ \hline
\end{tabular}
\label{tab:TblOutcomeSet2}
\end{table}
Finally, we define the new quantities $A_1$, $B_1$, $C_1$, $D_1$, $A_2$, $B_2$, $C_2$, and $D_2$ as follows:
\begin{eqnarray}
A_1 &=& a^2+d^2+ab+cd \label{eq:aa1}\\
B_1&=&b^2+c^2+ab+cd \label{eq:bb1}\\
C_1&=&a c+b d+2bc \label{eq:cc1}\\
D_1&=& ac+bd+2 a d\label{eq:dd1}\\
A_2&=&a^2+b^2+ad+bc \label{eq:aa2}\\
B_2&=&c^2+d^2+ad+bc \label{eq:bb2}\\
C_2&=&ac+bd+2cd\label{eq:cc2}\\
D_2&=&ac+bd+2ab\label{eq:dd2}.
\end{eqnarray}
Then the possible measurement outcomes at the measurement location 3 and their respective probabilities are as in Table \ref{tab:TblOutcomeSet3}.
\begin{table}[]
\caption{Possible measurement outcomes at the measurement location 3 and their respective probabilities in terms of the coefficients of $\rho_0$.}
\centering
\begin{tabular}{|l|l|}
\hline 
Outcome at 3  & Probability                 \\ & \\[-1em] \hline
$00$         & $\bigl\{(A_1+B_1)(B_2+D_2)+(C_1+D_1)(A_2+C_2)\bigr\}/2$ \\ & \\[-1em] \hline
$01$         & $\bigl\{(A_1+B_1)(A_2+C_2)+(C_1+D_1)(B_2+D_2)\bigr\}/2$ \\ & \\[-1em]\hline
$10$         & $\bigl\{(A_1+B_1)(A_2+C_2)+(C_1+D_1)(B_2+D_2)\bigr\}/2$ \\ & \\[-1em] \hline
$11$         & $\bigl\{(A_1+B_1)(B_2+D_2)+(C_1+D_1)(A_2+C_2)\bigr\}/2$ \\ \hline
\end{tabular}
\label{tab:TblOutcomeSet3}
\end{table} 
We define $p_3$ to be the proportion of the number of measurement outcomes at the measurement location 3 whose parity agrees to the total number of measurement outcomes, i.e., the proportion of getting $00$ or $11$ at the measurement location 3. Then we have
\begin{multline}
p_3 =p_1 (-4 a \sqrt{2 p_2-1}+\sqrt{2 p_1-1} \sqrt{2 p_2-1}\\+2 p_2+\sqrt{2 p_2-1}-1)+\frac{1}{2} (4 a \sqrt{2 p_2-1}\\-\sqrt{2 p_1-1} \sqrt{2 p_2-1}-2 p_2-\sqrt{2 p_2-1}+2)
\end{multline}
This implies that we can estimate the value of $a$ from $p_1$, $p_2$, and $p_3$ using the linear equation of $a$ above, and that the more measurements we conduct the more accurate the estimate becomes. Using the relation $a+b+c+d=1$ and the equations \ref{eq:eqrel1}, \ref{eq:eqrel2} and \ref{eq:eqrel3}, we can derive the value of $b$, $c$, and $d$ as well. \par

Next assume that the depolarizing parameter $\lambda$ is arbitrary. In this case the expressions involved in the state estimation become much more complicated, although the basic idea behind the process remains the same. We first introduce three functions $f_1(\lambda, a,b,c,d)$, $f_2(\lambda, a,b,c,d)$, and $f_3(\lambda, a,b,c,d)$ as follows: let $f_i(\lambda, a, b, c, d)$ be the probability that we measure $00$ or $11$ at the measurement location $i$ ($i=1,2,3$) in terms of $\lambda$, $a$, $b$, $c$ and $d$ and let $p_i$ be the proportion of actually having measured $00$ or $11$ to the total number of measurements at the measurement location $i$ ($i=1,2,3$). We solve the system of equations
\begin{equation}\label{eq:system_of_eqns_state_reconstruction}
\left\{ \begin{aligned} 
  f_1(\lambda, a,b,c,d)-p_1&=0\\
  f_2(\lambda, a,b,c,d)-p_2&=0\\
  f_3(\lambda, a,b,c,d)-p_3&=0\\
  a+b+c+d&=1
\end{aligned} \right. 
\end{equation} to compute the values of $a$, $b$, $c$, and $d$. It turns out that, for each fixed value of $\lambda=\lambda_0$, $f_1$ and $f_2$ are homogeneous polynomials of degree $2$ in $a$, $b$, $c$, and $d$, and $f_3$ is a homogeneous polynomial of degree $4$ in $a$, $b$, $c$ and $d$. The system of equations therefore consists of two quadratic equations, one quartic equation, and one linear equation. We can solve the system by appropriate substitution and obtain the values of $a$, $b$, $c$ and $d$. The values so obtained will be sufficiently close to the actual coefficients of the state in question if we take a large enough sample. The actual number of states needed for the values to be sufficiently close will be investigated in the next section. Unfortunately the concrete forms of these functions $f_i(\lambda,a,b,c,d)$ are in general difficult to calculate due to the complexity of the expressions involved unless the value of $\lambda$ is numerically given. Hence for inversion we use those $f_i$ with $\lambda$ substituted with a numerically explicit value. For $\lambda=0$ we have already seen the process of inversion earlier.

\section{\label{sec:simulation}Simulation and Results}
We have written a simulator in Python to numerically test the state reconstruction method described above. 
We conducted simulation experiments for the Werner state 
\begin{align*}
\rho_{\textrm{reality}}&=F\ket{\Phi^+}\bra{\Phi^+}+\frac{1-F}{3}\ket{\Psi^+}\bra{\Psi^+}\\ &+\frac{1-F}{3}\ket{\Psi^-}\bra{\Psi^-}+\frac{1-F}{3}\ket{\Phi^-}\bra{\Phi^-}
\end{align*} for three $F$'s: $F=0.95$, $F=0.7$, and $F=0.55$. We also experimented with two distinct depolarizing parameters, $\lambda=0$ and $\lambda=0.1$. We have thus $6$ distinct configurations in total in the set of experiments. For each configuration we performed the prescribed state reconstruction method with the number of circuit executions starting with $1,000$, and with increment of $1,000$, up to $70,000$, and for each such configuration we performed the same experiment $20$ times, obtaining $20$ samples for each configuration. We measured 1) the mean of the reconstruction fidelity (i.e., $F_{reconstructed}$ in Figure \ref{fig:OkaTriangle}) across these $20$ samples, 2) the standard deviation in the reconstruction fidelity across these $20$ samples, and 3) the mean of the trace distance between $\rho_{\textrm{reality}}$ and $\rho_{\textrm{reconstructed}}$ across these $20$ samples. These quantities are illustrated in Figures \ref{fig:mean_reconst_fidelity}, \ref{fig:stddev_reconst_fidelity} and \ref{fig:mean_trace_distance}. 

\begin{figure}
\includegraphics[keepaspectratio,scale=0.35]{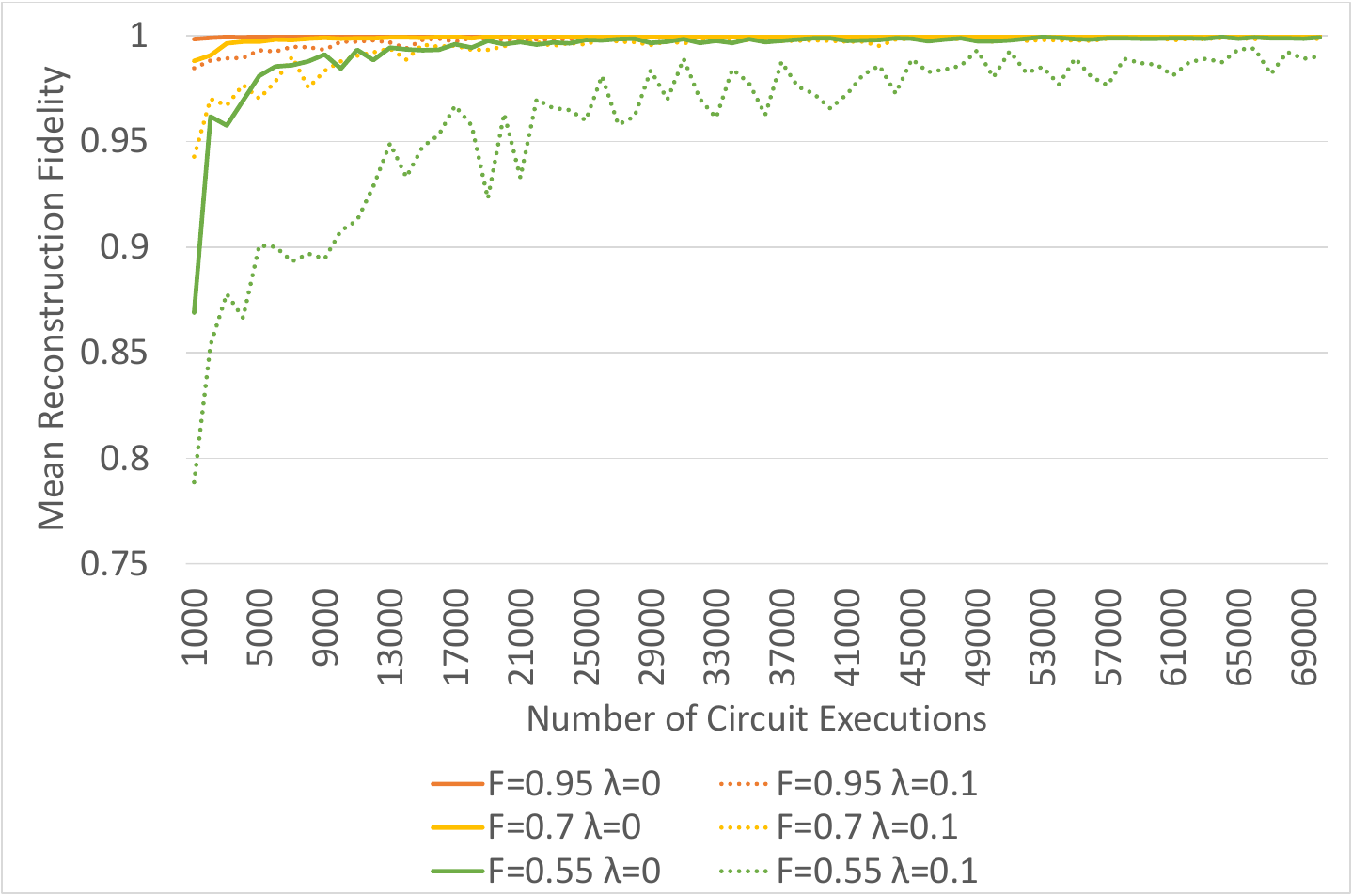}
\caption{\label{fig:mean_reconst_fidelity}The mean of the fidelity between $\rho_{\textrm{reality}}$ and $\rho_{\textrm{reconstructed}}$ for $F=0.95$, $F=0.7$ and $F=0.55$, each for the depolarizing parameter either $\lambda=0$ or $\lambda=0.1$ across the $20$ samples taken. The horizontal axis is the number of circuit executions and the vertical axis is the mean reconstruction fidelity.}
\end{figure}
\begin{figure}
\includegraphics[keepaspectratio,scale=0.35]{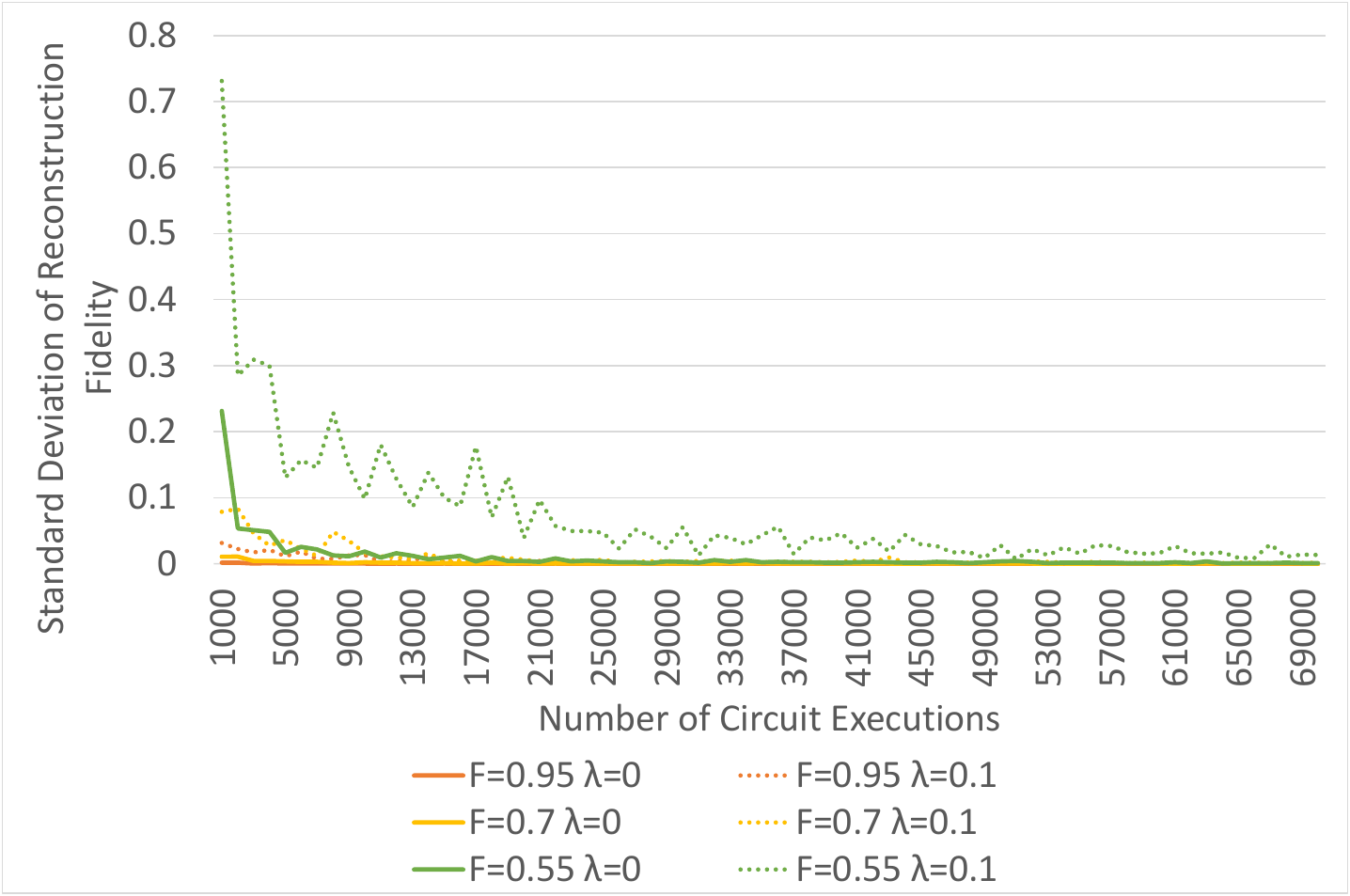}
\caption{\label{fig:stddev_reconst_fidelity}The standard deviation of the fidelity between $\rho_{\textrm{reality}}$ and $\rho_{\textrm{reconstructed}}$ for $F=0.95$, $F=0.7$ and $F=0.55$, each for the depolarizing parameter either $\lambda=0$ or $\lambda=0.1$ across the $20$ samples taken. The horizontal axis is the number of circuit executions and the vertical axis is the standard deviation of the reconstruction fidelity.}
\end{figure}
\begin{figure}
\includegraphics[keepaspectratio,scale=0.35]{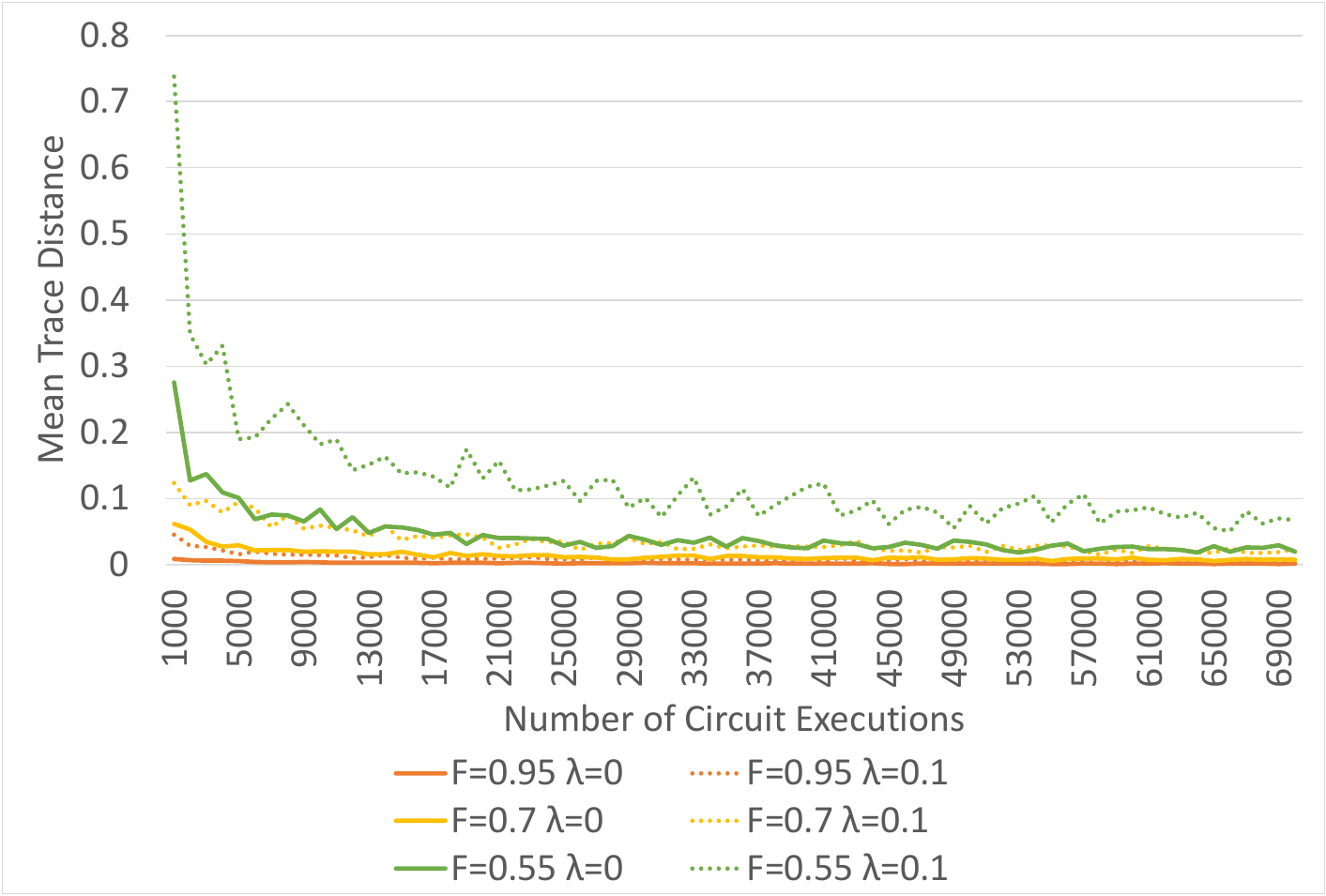}
\caption{\label{fig:mean_trace_distance}The mean of the trace distance between $\rho_{\textrm{reality}}$ and $\rho_{\textrm{reconstructed}}$ for $F=0.95$, $F=0.7$ and $F=0.55$, each for the depolarizing parameter either $\lambda=0$ or $\lambda=0.1$ across the $20$ samples taken. The horizontal axis is the number of circuit executions and the vertical axis is the mean trace distance.}
\end{figure}

Several observations can be made by looking at these data. First, the higher the fidelity of the base state is, the higher the reconstruction fidelity becomes and the smaller the standard deviation in the reconstruction fidelity becomes. This means that the reconstruction process becomes more and more accurate when the actual fidelity becomes higher. Second, when the depolarizing parameter is positive, the reconstruction fidelity drops and the standard deviation of the reconstruction fidelity becomes higher. This is due to the change in the reconstruction function $g$ to be introduced in Theorem \ref{theoremForAHat} later.

\par

In order to derive the estimation of the number of Bell pairs consumed to achieve a certain threshold fidelity, we wish to characterize the probability distribution that the estimation of the fidelity follows. So let $X_i$ be the binomial distribution $B(n,f_i(\lambda, a,b,c,d))$ with $i=1,2,3$, so that $X_i$ is the number of ``success`` outcomes out of $n$ trials with success probability $f_i(\lambda, a,b,c,d)$. Then we have, considering the $p_i$ as random variables, $p_i = X_i / n$, where the $p_i$ are as defined at the end of Section \ref{sec:calc_of_rho_reconst}. Let $\wh{a}(n, \lambda)$ be the reconstructed value of $a$ through solving the system of equations (\ref{eq:system_of_eqns_state_reconstruction}). Then symbolic calculations via computer show:
\begin{theorem}\label{theoremForAHat}
When $\lambda=0$ we have \begin{align*}
\wh{a}(n,0) = (&2000p_1p_2   +2000p_1\sqrt{p_1 - 0.5}\sqrt{p_2 - 0.5} \\ + &1414p_1\sqrt{p_2 - 0.5}  - 
1000p_1   -1000p_2\\ -& 1000p_3  - 1000\sqrt{p_1 - 0.5}\sqrt{p_2 - 0.5} \\ -& 707\sqrt{p_2 - 0.5} + 
1000)/ \\ &((5657p_1 - 2828)\sqrt{p_2 - 0.5}).
\end{align*}
When $\lambda = 0.1$ we have \begin{align*}
\wh{a}(n,0.1) = (&1503p_1p_2 + 1503p_1\sqrt{p_1 - 0.5}\sqrt{p_2 - 0.5} \\ + &
957p_1\sqrt{p_2 - 0.5}  - 752p_1  - 752p_2 \\ - &
1414p_3 - 752\sqrt{p_1 - 0.5}\sqrt{p_2 - 0.5} \\ - &
478\sqrt{p_2 - 0.5}  + 1083)/ \\ (&
3826p_1\sqrt{p_2 - 0.5}  -1913\sqrt{p_2 - 0.5})
\end{align*}
\end{theorem}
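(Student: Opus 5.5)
The plan is to invert the system (\ref{eq:system_of_eqns_state_reconstruction}) explicitly in each of the two cases, reusing the elimination already carried out in Section~\ref{sec:calc_of_rho_reconst}.

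\textbf{The case $\lambda=0$.} Here $f_1=(a+b)^2+(c+d)^2$ and $f_2=(a+d)^2+(b+c)^2$. By (\ref{eq:eqrel1})--(\ref{eq:eqrel3}), the sums $a+b$, $c+d$, $a+d$, $b+c$ are known functions of $p_1,p_2$. These use the branch choices $a+b\ge 1/2$ and $a+d\ge 1/2$, which follow from the standing assumption $a\ge 1/2$. Next I rewrite $\sqrt{2p_i-1}=\sqrt2\,\sqrt{p_i-0.5}$, which is the source of the constants $1414\approx 1000\sqrt2$, $707\approx 1000/\sqrt2$, $2828\approx 2000\sqrt2$ and $5657\approx 4000\sqrt2$.

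The key observation is that $f_3=1-\widetilde{P_3}$ is affine in $a$ once the four pair sums are fixed. Indeed, $A_1+B_1=(a+b)^2+(c+d)^2+\dots$ and the analogous expressions built from (\ref{eq:aa1})--(\ref{eq:dd2}) reduce to expressions in those sums plus terms linear in $a$. This is exactly the displayed formula for $p_3$ in Section~\ref{sec:calc_of_rho_reconst}.

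So the step is to take that displayed identity, collect the coefficient of $a$, namely $-4\sqrt{2p_2-1}\,(p_1-\tfrac12)$, and solve the linear equation. That gives
\[
a=\frac{\text{(remaining terms)}-p_3}{4\sqrt{2p_2-1}\,(p_1-\tfrac12)} .
\]
Multiplying numerator and denominator by $1000$ and substituting $\sqrt{2p_i-1}=\sqrt2\sqrt{p_i-0.5}$ reproduces the stated expression. The denominator becomes $(5657p_1-2828)\sqrt{p_2-0.5}$, and the numerator terms $2000p_1p_2$, $2000p_1\sqrt{p_1-0.5}\sqrt{p_2-0.5}$, and so on, are read off term by term. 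I would verify this expansion symbolically (e.g.\ in SymPy), since it is routine but error-prone.

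\textbf{The case $\lambda=0.1$.} I would first compute $f_1,f_2,f_3$ by propagating the Bell-diagonal inputs through the circuit of Figure~\ref{fig:the_circuit_with_number}, applying the two-qubit depolarizing channel with $\lambda=0.1$ after each gate. Because depolarizing noise maps Bell-diagonal states to Bell-diagonal states, the computation stays in the $4$-dimensional Bell-coefficient simplex.

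I expect $f_1$ and $f_2$ to remain functions of $a+b$ (resp.\ $a+d$) alone, with $\lambda$-dependent affine rescalings. The reason is that depolarizing noise shrinks the non-identity Pauli correlators uniformly, so the same square-root inversion applies with modified constants. Similarly, I expect $f_3$ to remain affine in $a$ given the pair sums. Solving the linear equation in $a$ and rounding the coefficients to integers after scaling then yields the second formula. Its constants ($1503$, $957$, $752$, $1414$, $478$, $1083$, $3826$, $1913$) should be checked against the symbolic output.

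\textbf{Main obstacle.} The hardest step is computing $f_3$ at $\lambda=0.1$ and confirming the two structural facts the inversion relies on: that $f_1$ and $f_2$ depend only on the respective pair sums, and that $f_3$ is affine in $a$ after eliminating $b,c,d$. If the affine structure failed, the quartic would require a root choice. Because of this, I would carry out the circuit propagation by computer algebra with $\lambda$ fixed numerically, as the paper indicates, rather than by hand.
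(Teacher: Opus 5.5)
Your proposal is correct and follows essentially the paper's route: the paper obtains both formulas by computer-assisted symbolic inversion of the system (\ref{eq:system_of_eqns_state_reconstruction}). Your $\lambda=0$ derivation, which solves the displayed affine-in-$a$ identity for $p_3$ (coefficient $-4\sqrt{2p_2-1}\,(p_1-\tfrac12)$, and $1000\cdot 4\sqrt2\approx 5657$), reproduces the stated expression term by term; for $\lambda=0.1$ your structural expectation holds, since the stated constants encode $f_i(0.1,\cdot)-\tfrac12=\kappa_i\bigl(f_i(0,\cdot)-\tfrac12\bigr)$ with $\kappa_1=\kappa_2=0.9^2$ and $\kappa_3=0.9^{10}$, so your computer-algebra plan would return those numbers.
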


It follows that the random variable $\wh{a}(n,\lambda)$ lies in the field $\Q(p_1, p_2, p_3, \sqrt{p_1-0.5},\sqrt{p_2-0.5})$ when $\lambda=0$ or $\lambda = 0.1$. We conjecture that $\wh{a}(n,\lambda)$ lies in the field $\Q(p_1, p_2, p_3, \sqrt{p_1-0.5},\sqrt{p_2-0.5})$ for a general $\lambda$, though it could not be proved here due to complexity of the expressions involved. Because each $p_i$ is well approximated by the normal distribution $\mathcal{N}(f_i(\lambda, a, b, c, d),  f_i(\lambda, a,b,c,d)(1-f_i(\lambda, a, b, c, d)))$ when $n$ is large enough, it follows that $\wh{a}$ is approximated by a normal distribution. This is formally expressed in the next theorem: 
\begin{theorem}\label{normalTheorem}
Let $\mathcal{N}(\mu, \sigma^2)$ denote the normal distribution of mean $\mu$ and variance $\sigma^2$. The random variable $\widehat{a}(n,0)$ in Theorem \ref{theoremForAHat} is approximated by the normal distribution $\mathcal{N}(a, \frac{1}{n}(\nabla g (f)^T\Sigma\nabla g(f))$ where $g:\mathbb{R}^3\rightarrow \mathbb{R}$ is defined by 
\begin{align*}
    g(x_1,x_2,x_3)\defeq (&2000x_1x_2   +2000x_1\sqrt{x_1 - 0.5}\sqrt{x_2 - 0.5} \\ + &1414x_1\sqrt{x_2 - 0.5}  - 
1000x_1   -1000x_2\\ -& 1000x_3  - 1000\sqrt{x_1 - 0.5}\sqrt{x_2 - 0.5} \\ -& 707\sqrt{x_2 - 0.5} + 
1000)/ \\ &((5657x_1 - 2828)\sqrt{x_2 - 0.5}),
\end{align*}
$\Sigma$ is the covariance matrix of the random vector $p=(p_1, p_2, p_3)$, and $f=(f_1(a,b,c,d,0), f_2(a,b,c,d,0), f_3(a,b,c,d,0))$.\par
Similarly, $\widehat{a}(n,0.1)$ in Theorem \ref{theoremForAHat} is approximated by the normal distribution 
$\mathcal{N}(a, \frac{1}{n}(\nabla g^\prime (f^\prime)^T\Sigma\nabla g^\prime(f^\prime))$ where $g^\prime:\mathbb{R}^3\rightarrow \mathbb{R}$ is defined by 
\begin{align*}
    g^\prime(x_1,x_2,x_3)\defeq (&1503x_1x_2 + 1503p_1\sqrt{x_1 - 0.5}\sqrt{x_2 - 0.5} \\ + &
957x_1\sqrt{x_2 - 0.5}  - 752x_1  - 752x_2 \\ - &
1414x_3 - 752\sqrt{x_1 - 0.5}\sqrt{x_2 - 0.5} \\ - &
478\sqrt{x_2 - 0.5}  + 1083)/ \\ (&
3826x_1\sqrt{x_2 - 0.5}  -1913\sqrt{x_2 - 0.5}),
\end{align*}
$\Sigma$ is the covariance matrix of the random vector $p=(p_1, p_2, p_3)$, and $f^\prime=(f_1(a,b,c,d,0.1), f_2(a,b,c,d,0.1), f_3(a,b,c,d,0.1))$
\end{theorem}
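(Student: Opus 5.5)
The natural route is the multivariate delta method. By Theorem~\ref{theoremForAHat}, $\wh{a}(n,0)=g(p_1,p_2,p_3)$ and $\wh{a}(n,0.1)=g^\prime(p_1,p_2,p_3)$. So it suffices to show three things: that $p=(p_1,p_2,p_3)$ is asymptotically normal around $f$ with covariance $\Sigma$ of order $1/n$; that $g$ (resp.\ $g^\prime$) is continuously differentiable near $f$ (resp.\ $f^\prime$); and that $g(f)=a$ (resp.\ $g^\prime(f^\prime)=a$). I treat $\lambda=0$ in detail; the case $\lambda=0.1$ repeats the argument verbatim with $g^\prime,f^\prime$ in place of $g,f$.

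\textbf{Step 1 (joint CLT).} Each circuit execution $j=1,\dots,n$ produces an indicator vector $Y^{(j)}\in\{0,1\}^3$, recording whether parity agreed at locations $1,2,3$. The vectors $Y^{(j)}$ are i.i.d.\ because the input Bell pairs are i.i.d.\ copies of $\rho_0$ and executions are independent. Their mean is $f$ and their covariance is some fixed matrix $\Sigma_1$, whose off-diagonal entries may be nonzero since all three locations act on the same pairs. Since $p=\frac1n\sum_j Y^{(j)}$, the multivariate central limit theorem gives $\sqrt n\,(p-f)\to\mathcal N(0,\Sigma_1)$. Hence $p\approx\mathcal N(f,\Sigma)$ with $\Sigma=\Sigma_1/n$. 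I would read the factor $\frac1n$ in the statement as this normalization, taking $\Sigma$ per-execution; the two conventions differ only by this scaling.

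\textbf{Step 2 (consistency and smoothness).} We need $g(f)=a$. This is precisely the defining property of $g$: it is the inversion of the system (\ref{eq:system_of_eqns_state_reconstruction}), derived exactly in Section~\ref{sec:calc_of_rho_reconst} from (\ref{eq:eqrel1})--(\ref{eq:eqrel3}) and the linear relation for $p_3$, so evaluating it at the exact probabilities returns $a$. One caveat is that the displayed numeric coefficients, such as $1414\approx 1000\sqrt2$, are rounded, so the identity holds only up to rounding. I would state the proof for the exact-coefficient $g$.

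For smoothness, $g$ is $C^1$ on the open set $\{x_1>1/2,\ x_2>1/2\}$, since there the denominator $(5657x_1-2828)\sqrt{x_2-1/2}$ is nonzero. We must check that $f$ lies in this set. We have $f_1=(a+b)^2+(c+d)^2>1/2$ unless $a+b=1/2$, and likewise $f_2>1/2$ unless $a+d=1/2$. With $a\ge 1/2$ these fail only in degenerate cases, which we exclude by assuming $a>1/2$ and $b+c+d>0$ as needed.

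\textbf{Step 3 (delta method).} Take the first-order Taylor expansion $g(p)=g(f)+\nabla g(f)^T(p-f)+o(\|p-f\|)$. Since $\|p-f\|=O_P(n^{-1/2})$, Slutsky's lemma gives
\[
\sqrt n\,(\wh{a}-a)\to\mathcal N\!\left(0,\nabla g(f)^T\Sigma_1\nabla g(f)\right),
\]
which is the claimed approximation.

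\textbf{Main obstacle.} I expect the only delicate point to be Step~2. One part is verifying $g(f)=a$ exactly, which requires the unrounded formula and a check that the correct branch of the square root was chosen, namely the one using $a+b\ge 1/2$ and $a+d\ge1/2$. The other part is ensuring $f$ stays away from the singular set $x_2=1/2$. There the gradient blows up and the normal approximation degrades for finite $n$, most visibly near $F=1/2$, which is consistent with the worse behaviour observed at $F=0.55$. I would handle the branch issue by substituting $\sqrt{2f_1-1}=|a+b-c-d|$ and $\sqrt{2f_2-1}=|a+d-b-c|$ symbolically.
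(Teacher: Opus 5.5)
Your proposal is correct and follows essentially the same route as the paper: the multivariate CLT for $p$, smoothness of $g$ (resp.\ $g^\prime$) on $\{x_1>1/2,\ x_2>1/2\}$, the identity $g(f)=a$, and the multivariate delta method. Your extra care only tightens the paper's argument: you read $\Sigma$ as the per-execution covariance so the factor $\frac1n$ is not counted twice, and you exclude the boundary case $a=1/2$ with $b=0$ or $d=0$, where $f_1$ or $f_2$ equals $1/2$ (the paper's claim that $a\ge 1/2$ forces $f_1,f_2>1/2$ overlooks this).
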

\begin{proof}
We utilize the multivariate delta method. Observe that the map $g$ is smooth on the open region $U=\{(x_1, x_2, x_3)\in \mathbb{R}^3 \mid x_1 > 1/2, x_2 > 1/2\}$. Note also that the random vector $p=(p_1, p_2, p_3)$ converges in distribution to the multivariate normal distribution $\mathcal{N}(f,\frac{1}{n}\Sigma)$ by the multivariate central limit theorem. By the multivariate delta method we have 
\begin{equation}
    \widehat{a}(n,0) \xrightarrow{D} \mathcal{N}\left(a, \frac{1}{n}\nabla g (f)^T\Sigma\nabla g(f)\right)
\end{equation} because $g(f)=a$ and, since we have assumed that $a\geq 1/2$, we have $f_1(a,b,c,d,0) > 1/2$ and $f_2(a,b,c,d,0) > 1/2$ (see the expressions in Section \ref{sec:calc_of_rho_reconst}). An entirely identical line of reasoning applies to the case $\lambda=0.1$ replacing $g$ with $g^\prime$ and $f$ with $f^\prime$, showing that \begin{equation}
    \widehat{a}(n,0.1) \xrightarrow{D} \mathcal{N}\left(a, \frac{1}{n}\nabla g^\prime (f^\prime)^T\Sigma\nabla g^\prime(f^\prime)\right)
\end{equation} This completes the proof.
\end{proof}

Theorem \ref{normalTheorem} implies that in order to consider the random variable $\wh{a}(n,0)$ or $\wh{a}(n,0.1)$ it suffices to consider the corresponding normal distribution of mean $a$ and some computable standard deviation when $n$ is large. It is also clear from the theorem that the standard deviation converges to $0$ as $n$ approaches infinity. 

\par

We consider two scenarios and in each scenario derive the estimated number of Bell pairs consumed to achieve the scenario's goal in the subsequent paragraphs: one scenario is where we measure identical Bell pairs multiple times to estimate their fidelity using the circuit's ability to reconstruct the state and the other is where we are given a state of initial fidelity $F_0$ and we are supposed to estimate the fidelity of it and purify it multiple times at the same time using the proposed circuit to achieve a certain threshold fidelity $F$. We treat the first scenario first. \par

We exploit Theorem \ref{normalTheorem} to assume that $\widehat{a}$ is normally distributed and to compute the standard deviation of $\widehat{a}$ in terms of the Bell-diagonal coefficients of the true state $\rho$ (i.e., $a$, $b$, $c$, and $d$ in the expression $\rho=a\ket{\Phi^+}\bra{\Phi^+}+b\ket{\Psi^+}\bra{\Psi^+}+c\ket{\Psi^-}\bra{\Psi^-}+d\ket{\Phi^-}\bra{\Phi^-}$), the number of circuit runs $n$ and the depolarizing parameter $\lambda$. By the standard arguments of confidence interval in the normal distribution, in $99.7\%$ of probability we are certain that the fidelity estimate lies within the closed interval $[a-3\sigma,a+3\sigma]$ where $a$ is the mean (true fidelity) and $\sigma$ is the standard deviation. We deal with only Werner states, so that we have only one free parameter $a=F$, and we have $b=c=d=(1-F)/3$. Let $\sigma(n)$ be the standard deviation given a particular true state and depolarizing parameter. We computed $\sigma(n)$ for $12$ distinct configurations using the formula in Theorem \ref{normalTheorem}, namely  $F=0.99$, $F=0.95$, $F=0.7$, and $F=0.55$ for a depolarizing parameter being one of $\lambda=0$, $\lambda=0.01$ and $\lambda=0.1$. For example, when $F=0.99$ and $\lambda=0$ we have $\sigma(n)=0.0880\frac{1}{\sqrt{n}}$. The $\sigma(n)$'s for the other configurations are listed in Table \ref{tab:sigma_n_s}. From these equations we can derive how large $n$ needs to be to achieve a certain threshold standard deviation. Using this derivation we computed the number of Bell pairs consumed to estimate the fidelity within the range of $F \pm 10^{-2}$ with $99.7\%$ of confidence, for these $12$ distinct configurations. The numbers of Bell pairs are listed in Table \ref{tab:numberOfBellPairsConsumed}.

\begin{table}[]
\caption{The $\sigma(n)$ for the $12$ distinct configurations. It is observed that the higher the true fidelity is, the lower the standard deviation is. It is also observed that higher depolarizing parameters make the standard deviation larger. }\label{tab:sigma_n_s}
\centering
\begin{tabular}{|c|c|}
\hline
                        & $\sigma(n)$                 \\ \hline
$F=0.99$, $\lambda=0$   & $0.0880\frac{1}{\sqrt{n}}$ \\ \hline
$F=0.99$, $\lambda=0.01$   & $0.1532\frac{1}{\sqrt{n}}$ \\ \hline
$F=0.99$, $\lambda=0.1$   & $0.7111\frac{1}{\sqrt{n}}$ \\ \hline
$F=0.95$, $\lambda=0$   & $0.2108\frac{1}{\sqrt{n}}$  \\ \hline
$F=0.95$, $\lambda=0.01$ & $0.2573\frac{1}{\sqrt{n}}$  \\ \hline
$F=0.95$, $\lambda=0.1$ & $0.8585\frac{1}{\sqrt{n}}$  \\ \hline
$F=0.7$, $\lambda=0$    & $1.164\frac{1}{\sqrt{n}}$   \\ \hline
$F=0.7$, $\lambda=0.01$  & $1.2864\frac{1}{\sqrt{n}}$  \\ \hline
$F=0.7$, $\lambda=0.1$  & $3.3251\frac{1}{\sqrt{n}}$  \\ \hline
$F=0.55$, $\lambda=0$   & $3.9171\frac{1}{\sqrt{n}}$  \\ \hline
$F=0.55$, $\lambda=0.01$ & $4.3296\frac{1}{\sqrt{n}}$ \\ \hline
$F=0.55$, $\lambda=0.1$ & $11.2091\frac{1}{\sqrt{n}}$ \\ \hline
\end{tabular}
\end{table}

\begin{table}[]
\caption{The number of Bell pairs consumed to be $99.7\%$ certain that $\widehat{a}$ lies in $[F-0.01,F+0.01]$.}\label{tab:numberOfBellPairsConsumed}
\begin{tabular}{|c|c|}
\hline
                        & \makecell{Number of Bell pairs needed to be $99.7\%$ \\ certain that $\widehat{a}$ lies in $[F-0.01, F+0.01]$} \\ \hline
$F=0.99$, $\lambda=0$   & $2,841$                                                                                     \\ \hline
$F=0.99$, $\lambda=0.01$   & $8,621$                                                                                     \\ \hline
$F=0.99$, $\lambda=0.1$   & $185,709$                                                                                     \\ \hline
$F=0.95$, $\lambda=0$   & $15,997$                                                                                     \\ \hline
$F=0.95$, $\lambda=0.01$   & $24,317$                                                                                     \\ \hline
$F=0.95$, $\lambda=0.1$ & $265,328$                                                                                    \\ \hline
$F=0.7$, $\lambda=0$    & $487,763$                                                                                    \\ \hline

$F=0.7$, $\lambda=0.01$  & $607,833$                                                                          \\ \hline
$F=0.7$, $\lambda=0.1$  & $3.98026\cdot 10^6$                                                                          \\ \hline
$F=0.55$, $\lambda=0$   & $5.52372\cdot 10^6$                                                                          \\ \hline
$F=0.55$, $\lambda=0.01$ & $6.88538\cdot 10^6$                                                                          \\ \hline
$F=0.55$, $\lambda=0.1$ & $4.52318\cdot 10^7$                                                                          \\ \hline
\end{tabular}
\end{table}

We go on to the other scenario, where we derive the estimated number of Bell pairs consumed to yield Bell pairs of fidelity $>0.9$ or $>0.99$ starting with the initial Werner state of fidelity $0.7$ and purifying it multiple times. By Theorem \ref{normalTheorem}, we may assume that the fidelity estimation $\widehat{a}$ is normally distributed. By the standard arguments of confidence interval, in $99.7\%$ of probability we are certain that the fidelity estimate lies within the closed interval $[a-3\sigma,a+3\sigma]$ where $a$ is the mean and $\sigma$ is the standard deviation. $\sigma$ is given by the theorem, and we computed it following the formula given in the theorem.

We first treat the case of generating Bell pairs of fidelity $>0.9$. According to Table \ref{tab:tbl_further_puri}, we need $2$ rounds of purification in order to achieve fidelity greater than $0.9$. According to Table \ref{tab:tbl_further_puri} we have $a=0.964$.  In order for the target fidelity $0.9$ to be separated from the range $[a-3\sigma,a+3\sigma]$, we need to have $\sigma \leq (0.964-0.9)/3=0.0213$. According to the standard deviation we computed, to have this inequality we need $66$ circuit runs, using the recursive scheme as the purification scheme. The purification success probability of the Werner state $\rho_0=0.7\ket{\Phi^+}\bra{\Phi^+}+0.1\ket{\Psi^+}\bra{\Psi^+}+0.1\ket{\Psi^-}\bra{\Psi^-}+0.1\ket{\Phi^-}\bra{\Phi^-}$ is $0.296$. The purification success probability of this state after the first purification (that is, purification success probability of the state $\rho_1$ in the notation of Table \ref{tab:tbl_further_puri}) is $0.538009$.

Combining all these results, we see the combined effect of purification and state reconstruction. Although it only takes two rounds of the circuit in Fig.~\ref{fig:the_circuit_with_number} to go from $F=0.7$ to $F=0.96$ \emph{when we know the full behavior of the system}, in order to learn that behavior requires more data. We estimate the number of Bell pairs to be consumed in order to prove that we are generating Bell pairs of fidelity higher than $0.9$, starting with the Werner state of fidelity $0.7$, to be $66\cdot 4^2 \cdot (1/0.296) \cdot (1/0.538009)=6,631$.

A similar reasoning applies to the case of generating Bell pairs of fidelity $>0.99$. According to Table \ref{tab:tbl_further_puri}, we need $3$ rounds of purification in order to achieve fidelity greater than $0.99$. According to Table \ref{tab:tbl_further_puri} we have $a=0.999$.  In order for the output fidelity $0.99$ to be statistically separated from the range $[a-3\sigma,a+3\sigma]$, we need to have $\sigma \leq (0.999-0.99)/3=0.003$. According to our calculation, to have this inequality we need $112$ circuit runs. The purification success probability of this state after the second purification is $0.865637$. From all of these results we deduce that the number of Bell pairs to be consumed in order to generate Bell pairs of fidelity higher than $0.99$, starting with the Werner state of fidelity $0.7$, is $112\cdot 4^3 \cdot (1/0.296) \cdot (1/0.538009)\cdot (1/0.865637)=51,997$. 

\par

We also performed simulation experiments using Stim. While when the local gate error rate is $0.1$, the suppression of errors is outweighed by the local gate errors, in the remaining cases the error rate constantly drops because of the purification circuit, as can be seen in Figure \ref{fig:simulation_using_Stim}, where the infidelity is defined as $1-\text{fidelity}$. It can be seen that the purification using this circuit is successful, effectively suppressing errors. Note that there are several data points that are not plotted for reasons to be explained now. The reason that the point of the purple line after 3 rounds of purification is missing is that the error rate becomes 0 and since the plot is logarithmic, the point goes to the negative infinity. The reason that the points of other than the purple line after 3 rounds of purification are missing is that the number of the shots that have been post-selected is zero and the division by zero is happening here.
\begin{figure}
\includegraphics[keepaspectratio,scale=0.37]{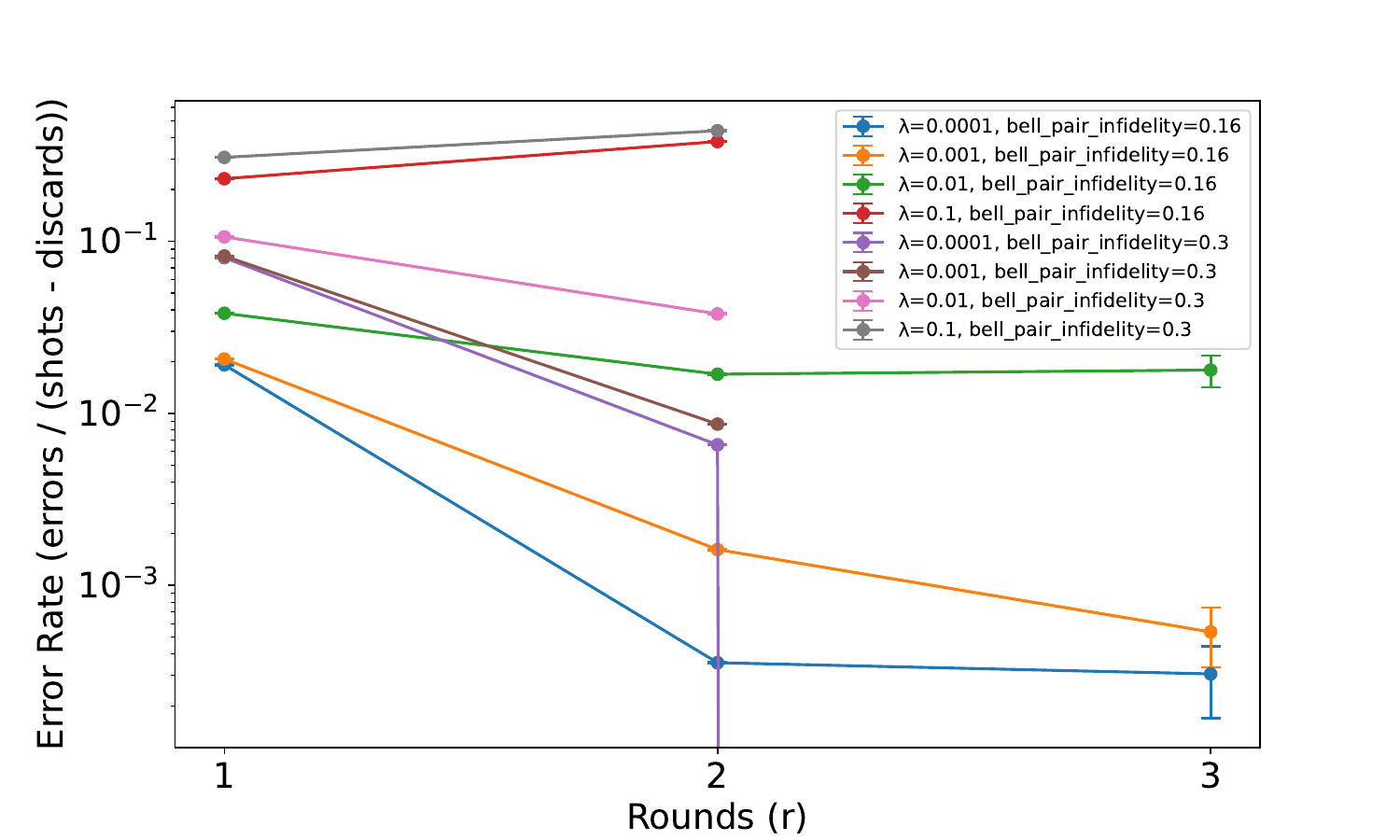}
\caption{\label{fig:simulation_using_Stim}The simulation results using the quantum simulator Stim. The horizontal axis is the number of purifications (rounds) and the vertical axis is the error rate. $\lambda$ represents the local error rate and bell\_pair\_infidelity is the infidelity of the initial Bell pair. Besides the case of local gate error rate being $0.1$, the error rate constantly drops because of the purification circuit.}
\end{figure}

\section{\label{sec:discussion}Discussion}

\par
We proposed a quantum circuit that is capable of both state reconstruction of Bell-diagonal states and purification (of Bell-diagonal states). We showed that as the number of measurement increases the accuracy of the state reconstruction also increases. We also calculated the estimated number of Bell pairs to conclude the fidelity estimate lies within a certain interval with $99.7\%$ confidence. Further, we showed that our circuit as a purification circuit makes the fidelity of the target state asymptotically approach $1$ when the local depolarizing gate error rate $\lambda$ is $0$. 
\par

When compared to Flammia \textit{et al.}'s work of fidelity estimation \cite{PhysRevLett.106.230501}, it can be said that the above method of state reconstruction gives some redundant information on the error terms (the coefficients of $\ket{\Psi^+}\bra{\Psi^+}$, $\ket{\Psi^-}\bra{\Psi^-}$, and $\ket{\Phi^-}\bra{\Phi^-}$) since it fully reconstructs Bell-diagonal states. \par
There is work on estimating various properties of a quantum state, including fidelity estimation, by Huang \textit{et al.} \cite{Huang:2020aa}, from very few quantum measurements. However, their emphasis is on the scaling with the system size and not on advantages over quantum state tomography given a fixed, small system size, as in our case. \par

\section{\label{sec:acknowledgments}Acknowledgments}
We thank David Elkouss and Joshua Carlo A. Casapao for fruitful discussions. This work was supported by JST SPRING, Japan Grant Number JPMJSP2123 and by JST Moonshot R$\&$D Grants JPMJMS2061 and JPMJMS226C.

\section{\label{sec:competing_interests}Competing Interests}
The authors declare no competing interests.

\bibliography{Oka_PRA_draft}

\providecommand{\noopsort}[1]{}\providecommand{\singleletter}[1]{#1}%
\begin{thebibliography}{33}%
\makeatletter
\providecommand \@ifxundefined [1]{%
 \@ifx{#1\undefined}
}%
\providecommand \@ifnum [1]{%
 \ifnum #1\expandafter \@firstoftwo
 \else \expandafter \@secondoftwo
 \fi
}%
\providecommand \@ifx [1]{%
 \ifx #1\expandafter \@firstoftwo
 \else \expandafter \@secondoftwo
 \fi
}%
\providecommand \natexlab [1]{#1}%
\providecommand \enquote  [1]{``#1''}%
\providecommand \bibnamefont  [1]{#1}%
\providecommand \bibfnamefont [1]{#1}%
\providecommand \citenamefont [1]{#1}%
\providecommand \href@noop [0]{\@secondoftwo}%
\providecommand \href [0]{\begingroup \@sanitize@url \@href}%
\providecommand \@href[1]{\@@startlink{#1}\@@href}%
\providecommand \@@href[1]{\endgroup#1\@@endlink}%
\providecommand \@sanitize@url [0]{\catcode `\\12\catcode `\$12\catcode `\&12\catcode `\#12\catcode `\^12\catcode `\_12\catcode `\%12\relax}%
\providecommand \@@startlink[1]{}%
\providecommand \@@endlink[0]{}%
\providecommand \url  [0]{\begingroup\@sanitize@url \@url }%
\providecommand \@url [1]{\endgroup\@href {#1}{\urlprefix }}%
\providecommand \urlprefix  [0]{URL }%
\providecommand \Eprint [0]{\href }%
\providecommand \doibase [0]{https://doi.org/}%
\providecommand \selectlanguage [0]{\@gobble}%
\providecommand \bibinfo  [0]{\@secondoftwo}%
\providecommand \bibfield  [0]{\@secondoftwo}%
\providecommand \translation [1]{[#1]}%
\providecommand \BibitemOpen [0]{}%
\providecommand \bibitemStop [0]{}%
\providecommand \bibitemNoStop [0]{.\EOS\space}%
\providecommand \EOS [0]{\spacefactor3000\relax}%
\providecommand \BibitemShut  [1]{\csname bibitem#1\endcsname}%
\let\auto@bib@innerbib\@empty
\bibitem [{\citenamefont {Barral}\ \emph {et~al.}(2025)\citenamefont {Barral}, \citenamefont {Cardama}, \citenamefont {D{\'\i}az-Camacho}, \citenamefont {Fa{\'\i}lde}, \citenamefont {Llovo}, \citenamefont {Mussa-Juane}, \citenamefont {V{\'a}zquez-P{\'e}rez}, \citenamefont {Villasuso}, \citenamefont {Pi{\~n}eiro}, \citenamefont {Costas}, \citenamefont {Pichel}, \citenamefont {Pena},\ and\ \citenamefont {G{\'o}mez}}]{BARRAL2025100747}%
  \BibitemOpen
  \bibfield  {author} {\bibinfo {author} {\bibfnamefont {D.}~\bibnamefont {Barral}}, \bibinfo {author} {\bibfnamefont {F.~J.}\ \bibnamefont {Cardama}}, \bibinfo {author} {\bibfnamefont {G.}~\bibnamefont {D{\'\i}az-Camacho}}, \bibinfo {author} {\bibfnamefont {D.}~\bibnamefont {Fa{\'\i}lde}}, \bibinfo {author} {\bibfnamefont {I.~F.}\ \bibnamefont {Llovo}}, \bibinfo {author} {\bibfnamefont {M.}~\bibnamefont {Mussa-Juane}}, \bibinfo {author} {\bibfnamefont {J.}~\bibnamefont {V{\'a}zquez-P{\'e}rez}}, \bibinfo {author} {\bibfnamefont {J.}~\bibnamefont {Villasuso}}, \bibinfo {author} {\bibfnamefont {C.}~\bibnamefont {Pi{\~n}eiro}}, \bibinfo {author} {\bibfnamefont {N.}~\bibnamefont {Costas}}, \bibinfo {author} {\bibfnamefont {J.~C.}\ \bibnamefont {Pichel}}, \bibinfo {author} {\bibfnamefont {T.~F.}\ \bibnamefont {Pena}},\ and\ \bibinfo {author} {\bibfnamefont {A.}~\bibnamefont {G{\'o}mez}},\ }\bibfield  {title} {\bibinfo {title} {{Review of Distributed Quantum Computing: From single QPU to High Performance Quantum
  Computing}},\ }\href {https://doi.org/https://doi.org/10.1016/j.cosrev.2025.100747} {\bibfield  {journal} {\bibinfo  {journal} {Computer Science Review}\ }\textbf {\bibinfo {volume} {57}},\ \bibinfo {pages} {100747} (\bibinfo {year} {2025})}\BibitemShut {NoStop}%
\bibitem [{\citenamefont {Awschalom}\ \emph {et~al.}(2021)\citenamefont {Awschalom}, \citenamefont {Berggren}, \citenamefont {Bernien}, \citenamefont {Bhave}, \citenamefont {Carr}, \citenamefont {Davids}, \citenamefont {Economou}, \citenamefont {Englund}, \citenamefont {Faraon}, \citenamefont {Fejer}, \citenamefont {Guha}, \citenamefont {Gustafsson}, \citenamefont {Hu}, \citenamefont {Jiang}, \citenamefont {Kim}, \citenamefont {Korzh}, \citenamefont {Kumar}, \citenamefont {Kwiat}, \citenamefont {Lon\ifmmode~\check{c}\else \v{c}\fi{}ar}, \citenamefont {Lukin}, \citenamefont {Miller}, \citenamefont {Monroe}, \citenamefont {Nam}, \citenamefont {Narang}, \citenamefont {Orcutt}, \citenamefont {Raymer}, \citenamefont {Safavi-Naeini}, \citenamefont {Spiropulu}, \citenamefont {Srinivasan}, \citenamefont {Sun}, \citenamefont {Vu\ifmmode \check{c}\else \v{c}\fi{}kovi\ifmmode~\acute{c}\else \'{c}\fi{}}, \citenamefont {Waks}, \citenamefont {Walsworth}, \citenamefont {Weiner},\ and\ \citenamefont
  {Zhang}}]{PRXQuantum.2.017002}%
  \BibitemOpen
  \bibfield  {author} {\bibinfo {author} {\bibfnamefont {D.}~\bibnamefont {Awschalom}}, \bibinfo {author} {\bibfnamefont {K.~K.}\ \bibnamefont {Berggren}}, \bibinfo {author} {\bibfnamefont {H.}~\bibnamefont {Bernien}}, \bibinfo {author} {\bibfnamefont {S.}~\bibnamefont {Bhave}}, \bibinfo {author} {\bibfnamefont {L.~D.}\ \bibnamefont {Carr}}, \bibinfo {author} {\bibfnamefont {P.}~\bibnamefont {Davids}}, \bibinfo {author} {\bibfnamefont {S.~E.}\ \bibnamefont {Economou}}, \bibinfo {author} {\bibfnamefont {D.}~\bibnamefont {Englund}}, \bibinfo {author} {\bibfnamefont {A.}~\bibnamefont {Faraon}}, \bibinfo {author} {\bibfnamefont {M.}~\bibnamefont {Fejer}}, \bibinfo {author} {\bibfnamefont {S.}~\bibnamefont {Guha}}, \bibinfo {author} {\bibfnamefont {M.~V.}\ \bibnamefont {Gustafsson}}, \bibinfo {author} {\bibfnamefont {E.}~\bibnamefont {Hu}}, \bibinfo {author} {\bibfnamefont {L.}~\bibnamefont {Jiang}}, \bibinfo {author} {\bibfnamefont {J.}~\bibnamefont {Kim}}, \bibinfo {author} {\bibfnamefont {B.}~\bibnamefont
  {Korzh}}, \bibinfo {author} {\bibfnamefont {P.}~\bibnamefont {Kumar}}, \bibinfo {author} {\bibfnamefont {P.~G.}\ \bibnamefont {Kwiat}}, \bibinfo {author} {\bibfnamefont {M.}~\bibnamefont {Lon\ifmmode~\check{c}\else \v{c}\fi{}ar}}, \bibinfo {author} {\bibfnamefont {M.~D.}\ \bibnamefont {Lukin}}, \bibinfo {author} {\bibfnamefont {D.~A.}\ \bibnamefont {Miller}}, \bibinfo {author} {\bibfnamefont {C.}~\bibnamefont {Monroe}}, \bibinfo {author} {\bibfnamefont {S.~W.}\ \bibnamefont {Nam}}, \bibinfo {author} {\bibfnamefont {P.}~\bibnamefont {Narang}}, \bibinfo {author} {\bibfnamefont {J.~S.}\ \bibnamefont {Orcutt}}, \bibinfo {author} {\bibfnamefont {M.~G.}\ \bibnamefont {Raymer}}, \bibinfo {author} {\bibfnamefont {A.~H.}\ \bibnamefont {Safavi-Naeini}}, \bibinfo {author} {\bibfnamefont {M.}~\bibnamefont {Spiropulu}}, \bibinfo {author} {\bibfnamefont {K.}~\bibnamefont {Srinivasan}}, \bibinfo {author} {\bibfnamefont {S.}~\bibnamefont {Sun}}, \bibinfo {author} {\bibfnamefont {J.}~\bibnamefont {Vu\ifmmode \check{c}\else
  \v{c}\fi{}kovi\ifmmode~\acute{c}\else \'{c}\fi{}}}, \bibinfo {author} {\bibfnamefont {E.}~\bibnamefont {Waks}}, \bibinfo {author} {\bibfnamefont {R.}~\bibnamefont {Walsworth}}, \bibinfo {author} {\bibfnamefont {A.~M.}\ \bibnamefont {Weiner}},\ and\ \bibinfo {author} {\bibfnamefont {Z.}~\bibnamefont {Zhang}},\ }\bibfield  {title} {\bibinfo {title} {{Development of Quantum Interconnects (QuICs) for Next-Generation Information Technologies}},\ }\href {https://doi.org/10.1103/PRXQuantum.2.017002} {\bibfield  {journal} {\bibinfo  {journal} {PRX Quantum}\ }\textbf {\bibinfo {volume} {2}},\ \bibinfo {pages} {017002} (\bibinfo {year} {2021})}\BibitemShut {NoStop}%
\bibitem [{\citenamefont {Bennett}\ \emph {et~al.}(1996{\natexlab{a}})\citenamefont {Bennett}, \citenamefont {Brassard}, \citenamefont {Popescu}, \citenamefont {Schumacher}, \citenamefont {Smolin},\ and\ \citenamefont {Wootters}}]{PhysRevLett.76.722}%
  \BibitemOpen
  \bibfield  {author} {\bibinfo {author} {\bibfnamefont {C.~H.}\ \bibnamefont {Bennett}}, \bibinfo {author} {\bibfnamefont {G.}~\bibnamefont {Brassard}}, \bibinfo {author} {\bibfnamefont {S.}~\bibnamefont {Popescu}}, \bibinfo {author} {\bibfnamefont {B.}~\bibnamefont {Schumacher}}, \bibinfo {author} {\bibfnamefont {J.~A.}\ \bibnamefont {Smolin}},\ and\ \bibinfo {author} {\bibfnamefont {W.~K.}\ \bibnamefont {Wootters}},\ }\bibfield  {title} {\bibinfo {title} {Purification of {Noisy Entanglement and Faithful Teleportation via Noisy Channels}},\ }\href {https://doi.org/10.1103/PhysRevLett.76.722} {\bibfield  {journal} {\bibinfo  {journal} {Phys. Rev. Lett.}\ }\textbf {\bibinfo {volume} {76}},\ \bibinfo {pages} {722} (\bibinfo {year} {1996}{\natexlab{a}})}\BibitemShut {NoStop}%
\bibitem [{\citenamefont {Bennett}\ \emph {et~al.}(1996{\natexlab{b}})\citenamefont {Bennett}, \citenamefont {DiVincenzo}, \citenamefont {Smolin},\ and\ \citenamefont {Wootters}}]{PhysRevA.54.3824}%
  \BibitemOpen
  \bibfield  {author} {\bibinfo {author} {\bibfnamefont {C.~H.}\ \bibnamefont {Bennett}}, \bibinfo {author} {\bibfnamefont {D.~P.}\ \bibnamefont {DiVincenzo}}, \bibinfo {author} {\bibfnamefont {J.~A.}\ \bibnamefont {Smolin}},\ and\ \bibinfo {author} {\bibfnamefont {W.~K.}\ \bibnamefont {Wootters}},\ }\bibfield  {title} {\bibinfo {title} {Mixed-state entanglement and quantum error correction},\ }\href {https://doi.org/10.1103/PhysRevA.54.3824} {\bibfield  {journal} {\bibinfo  {journal} {Phys. Rev. A}\ }\textbf {\bibinfo {volume} {54}},\ \bibinfo {pages} {3824} (\bibinfo {year} {1996}{\natexlab{b}})}\BibitemShut {NoStop}%
\bibitem [{\citenamefont {Fujii}\ and\ \citenamefont {Yamamoto}(2009)}]{PhysRevA.80.042308}%
  \BibitemOpen
  \bibfield  {author} {\bibinfo {author} {\bibfnamefont {K.}~\bibnamefont {Fujii}}\ and\ \bibinfo {author} {\bibfnamefont {K.}~\bibnamefont {Yamamoto}},\ }\bibfield  {title} {\bibinfo {title} {Entanglement purification with double selection},\ }\href {https://doi.org/10.1103/PhysRevA.80.042308} {\bibfield  {journal} {\bibinfo  {journal} {Phys. Rev. A}\ }\textbf {\bibinfo {volume} {80}},\ \bibinfo {pages} {042308} (\bibinfo {year} {2009})}\BibitemShut {NoStop}%
\bibitem [{\citenamefont {Goodenough}\ \emph {et~al.}(2024)\citenamefont {Goodenough}, \citenamefont {de~Bone}, \citenamefont {Addala}, \citenamefont {Krastanov}, \citenamefont {Jansen}, \citenamefont {Gijswijt},\ and\ \citenamefont {Elkouss}}]{10.1109/JSAC.2024.3380094}%
  \BibitemOpen
  \bibfield  {author} {\bibinfo {author} {\bibfnamefont {K.}~\bibnamefont {Goodenough}}, \bibinfo {author} {\bibfnamefont {S.}~\bibnamefont {de~Bone}}, \bibinfo {author} {\bibfnamefont {V.}~\bibnamefont {Addala}}, \bibinfo {author} {\bibfnamefont {S.}~\bibnamefont {Krastanov}}, \bibinfo {author} {\bibfnamefont {S.}~\bibnamefont {Jansen}}, \bibinfo {author} {\bibfnamefont {D.}~\bibnamefont {Gijswijt}},\ and\ \bibinfo {author} {\bibfnamefont {D.}~\bibnamefont {Elkouss}},\ }\bibfield  {title} {\bibinfo {title} {Near-term n to k distillation protocols using graph codes},\ }\href {https://doi.org/10.1109/JSAC.2024.3380094} {\bibfield  {journal} {\bibinfo  {journal} {IEEE J.Sel. A. Commun.}\ }\textbf {\bibinfo {volume} {42}},\ \bibinfo {pages} {1830–1849} (\bibinfo {year} {2024})}\BibitemShut {NoStop}%
\bibitem [{\citenamefont {Kim}\ \emph {et~al.}(2024)\citenamefont {Kim}, \citenamefont {Yun},\ and\ \citenamefont {Bae}}]{Kim_2025}%
  \BibitemOpen
  \bibfield  {author} {\bibinfo {author} {\bibfnamefont {J.}~\bibnamefont {Kim}}, \bibinfo {author} {\bibfnamefont {J.}~\bibnamefont {Yun}},\ and\ \bibinfo {author} {\bibfnamefont {J.}~\bibnamefont {Bae}},\ }\bibfield  {title} {\bibinfo {title} {Purification of noisy measurements and faithful distillation of entanglement},\ }\href {https://doi.org/10.1088/1751-8121/ada0fa} {\bibfield  {journal} {\bibinfo  {journal} {Journal of Physics A: Mathematical and Theoretical}\ }\textbf {\bibinfo {volume} {58}},\ \bibinfo {pages} {03LT01} (\bibinfo {year} {2024})}\BibitemShut {NoStop}%
\bibitem [{\citenamefont {Pan}\ \emph {et~al.}(2001)\citenamefont {Pan}, \citenamefont {Simon}, \citenamefont {Brukner},\ and\ \citenamefont {Zeilinger}}]{Pan2001}%
  \BibitemOpen
  \bibfield  {author} {\bibinfo {author} {\bibfnamefont {J.-W.}\ \bibnamefont {Pan}}, \bibinfo {author} {\bibfnamefont {C.}~\bibnamefont {Simon}}, \bibinfo {author} {\bibfnamefont {{\v{C}}.}~\bibnamefont {Brukner}},\ and\ \bibinfo {author} {\bibfnamefont {A.}~\bibnamefont {Zeilinger}},\ }\bibfield  {title} {\bibinfo {title} {Entanglement purification for quantum communication},\ }\href {https://doi.org/10.1038/35074041} {\bibfield  {journal} {\bibinfo  {journal} {Nature}\ }\textbf {\bibinfo {volume} {410}},\ \bibinfo {pages} {1067} (\bibinfo {year} {2001})}\BibitemShut {NoStop}%
\bibitem [{\citenamefont {Yamamoto}\ \emph {et~al.}(2001)\citenamefont {Yamamoto}, \citenamefont {Koashi},\ and\ \citenamefont {Imoto}}]{PhysRevA.64.012304}%
  \BibitemOpen
  \bibfield  {author} {\bibinfo {author} {\bibfnamefont {T.}~\bibnamefont {Yamamoto}}, \bibinfo {author} {\bibfnamefont {M.}~\bibnamefont {Koashi}},\ and\ \bibinfo {author} {\bibfnamefont {N.}~\bibnamefont {Imoto}},\ }\bibfield  {title} {\bibinfo {title} {Concentration and purification scheme for two partially entangled photon pairs},\ }\href {https://doi.org/10.1103/PhysRevA.64.012304} {\bibfield  {journal} {\bibinfo  {journal} {Phys. Rev. A}\ }\textbf {\bibinfo {volume} {64}},\ \bibinfo {pages} {012304} (\bibinfo {year} {2001})}\BibitemShut {NoStop}%
\bibitem [{\citenamefont {\ifmmode~\dot{Z}\else \.{Z}\fi{}ukowski}\ \emph {et~al.}(1993)\citenamefont {\ifmmode~\dot{Z}\else \.{Z}\fi{}ukowski}, \citenamefont {Zeilinger}, \citenamefont {Horne},\ and\ \citenamefont {Ekert}}]{PhysRevLett.71.4287}%
  \BibitemOpen
  \bibfield  {author} {\bibinfo {author} {\bibfnamefont {M.}~\bibnamefont {\ifmmode~\dot{Z}\else \.{Z}\fi{}ukowski}}, \bibinfo {author} {\bibfnamefont {A.}~\bibnamefont {Zeilinger}}, \bibinfo {author} {\bibfnamefont {M.~A.}\ \bibnamefont {Horne}},\ and\ \bibinfo {author} {\bibfnamefont {A.~K.}\ \bibnamefont {Ekert}},\ }\bibfield  {title} {\bibinfo {title} {``{Event}-ready-detectors'' {Bell} experiment via entanglement swapping},\ }\href {https://doi.org/10.1103/PhysRevLett.71.4287} {\bibfield  {journal} {\bibinfo  {journal} {Phys. Rev. Lett.}\ }\textbf {\bibinfo {volume} {71}},\ \bibinfo {pages} {4287} (\bibinfo {year} {1993})}\BibitemShut {NoStop}%
\bibitem [{\citenamefont {Altepeter}\ \emph {et~al.}(2005)\citenamefont {Altepeter}, \citenamefont {Jeffrey},\ and\ \citenamefont {Kwiat}}]{ALTEPETER2005105}%
  \BibitemOpen
  \bibfield  {author} {\bibinfo {author} {\bibfnamefont {J.}~\bibnamefont {Altepeter}}, \bibinfo {author} {\bibfnamefont {E.}~\bibnamefont {Jeffrey}},\ and\ \bibinfo {author} {\bibfnamefont {P.}~\bibnamefont {Kwiat}},\ }\bibfield  {title} {\bibinfo {title} {Photonic state tomography},\ }\href {https://doi.org/https://doi.org/10.1016/S1049-250X(05)52003-2} {\bibfield  {journal} {\bibinfo  {journal} {Advances In Atomic, Molecular, and Optical Physics}\ }\textbf {\bibinfo {volume} {52}},\ \bibinfo {pages} {105} (\bibinfo {year} {2005})}\BibitemShut {NoStop}%
\bibitem [{\citenamefont {Hradil}(1997)}]{PhysRevA.55.R1561}%
  \BibitemOpen
  \bibfield  {author} {\bibinfo {author} {\bibfnamefont {Z.}~\bibnamefont {Hradil}},\ }\bibfield  {title} {\bibinfo {title} {Quantum-state estimation},\ }\href {https://doi.org/10.1103/PhysRevA.55.R1561} {\bibfield  {journal} {\bibinfo  {journal} {Phys. Rev. A}\ }\textbf {\bibinfo {volume} {55}},\ \bibinfo {pages} {R1561} (\bibinfo {year} {1997})}\BibitemShut {NoStop}%
\bibitem [{\citenamefont {James}\ \emph {et~al.}(2001)\citenamefont {James}, \citenamefont {Kwiat}, \citenamefont {Munro},\ and\ \citenamefont {White}}]{PhysRevA.64.052312}%
  \BibitemOpen
  \bibfield  {author} {\bibinfo {author} {\bibfnamefont {D.~F.~V.}\ \bibnamefont {James}}, \bibinfo {author} {\bibfnamefont {P.~G.}\ \bibnamefont {Kwiat}}, \bibinfo {author} {\bibfnamefont {W.~J.}\ \bibnamefont {Munro}},\ and\ \bibinfo {author} {\bibfnamefont {A.~G.}\ \bibnamefont {White}},\ }\bibfield  {title} {\bibinfo {title} {Measurement of qubits},\ }\href {https://doi.org/10.1103/PhysRevA.64.052312} {\bibfield  {journal} {\bibinfo  {journal} {Phys. Rev. A}\ }\textbf {\bibinfo {volume} {64}},\ \bibinfo {pages} {052312} (\bibinfo {year} {2001})}\BibitemShut {NoStop}%
\bibitem [{\citenamefont {Qi}\ \emph {et~al.}(2017)\citenamefont {Qi}, \citenamefont {Hou}, \citenamefont {Wang}, \citenamefont {Dong}, \citenamefont {Zhong}, \citenamefont {Li}, \citenamefont {Xiang}, \citenamefont {Wiseman}, \citenamefont {Li},\ and\ \citenamefont {Guo}}]{Qi2017}%
  \BibitemOpen
  \bibfield  {author} {\bibinfo {author} {\bibfnamefont {B.}~\bibnamefont {Qi}}, \bibinfo {author} {\bibfnamefont {Z.}~\bibnamefont {Hou}}, \bibinfo {author} {\bibfnamefont {Y.}~\bibnamefont {Wang}}, \bibinfo {author} {\bibfnamefont {D.}~\bibnamefont {Dong}}, \bibinfo {author} {\bibfnamefont {H.-S.}\ \bibnamefont {Zhong}}, \bibinfo {author} {\bibfnamefont {L.}~\bibnamefont {Li}}, \bibinfo {author} {\bibfnamefont {G.-Y.}\ \bibnamefont {Xiang}}, \bibinfo {author} {\bibfnamefont {H.~M.}\ \bibnamefont {Wiseman}}, \bibinfo {author} {\bibfnamefont {C.-F.}\ \bibnamefont {Li}},\ and\ \bibinfo {author} {\bibfnamefont {G.-C.}\ \bibnamefont {Guo}},\ }\bibfield  {title} {\bibinfo {title} {Adaptive quantum state tomography via linear regression estimation: Theory and two-qubit experiment},\ }\href {https://doi.org/10.1038/s41534-017-0016-4} {\bibfield  {journal} {\bibinfo  {journal} {npj Quantum Information}\ }\textbf {\bibinfo {volume} {3}},\ \bibinfo {pages} {19} (\bibinfo {year} {2017})}\BibitemShut {NoStop}%
\bibitem [{\citenamefont {Guţă}\ \emph {et~al.}(2020)\citenamefont {Guţă}, \citenamefont {Kahn}, \citenamefont {Kueng},\ and\ \citenamefont {Tropp}}]{Guţă_2020}%
  \BibitemOpen
  \bibfield  {author} {\bibinfo {author} {\bibfnamefont {M.}~\bibnamefont {Guţă}}, \bibinfo {author} {\bibfnamefont {J.}~\bibnamefont {Kahn}}, \bibinfo {author} {\bibfnamefont {R.}~\bibnamefont {Kueng}},\ and\ \bibinfo {author} {\bibfnamefont {J.~A.}\ \bibnamefont {Tropp}},\ }\bibfield  {title} {\bibinfo {title} {Fast state tomography with optimal error bounds},\ }\href {https://doi.org/10.1088/1751-8121/ab8111} {\bibfield  {journal} {\bibinfo  {journal} {Journal of Physics A: Mathematical and Theoretical}\ }\textbf {\bibinfo {volume} {53}},\ \bibinfo {pages} {204001} (\bibinfo {year} {2020})}\BibitemShut {NoStop}%
\bibitem [{\citenamefont {Oka}\ \emph {et~al.}(2016)\citenamefont {Oka}, \citenamefont {Satoh},\ and\ \citenamefont {Van~Meter}}]{Oka2016}%
  \BibitemOpen
  \bibfield  {author} {\bibinfo {author} {\bibfnamefont {T.}~\bibnamefont {Oka}}, \bibinfo {author} {\bibfnamefont {T.}~\bibnamefont {Satoh}},\ and\ \bibinfo {author} {\bibfnamefont {R.}~\bibnamefont {Van~Meter}},\ }\bibfield  {title} {\bibinfo {title} {A {Classical} {Network} {Protocol} to {Support} {Distributed} {Quantum} {State} {Tomography}},\ }in\ \href {https://doi.org/10.1109/GLOCOMW.2016.7848802} {\emph {\bibinfo {booktitle} {2016 IEEE Globecom Workshops (GC Wkshps)}}}\ (\bibinfo {year} {2016})\ pp.\ \bibinfo {pages} {1--6}\BibitemShut {NoStop}%
\bibitem [{\citenamefont {Oka}(2018)}]{OkaBThesis}%
  \BibitemOpen
  \bibfield  {author} {\bibinfo {author} {\bibfnamefont {T.}~\bibnamefont {Oka}},\ }\emph {\bibinfo {title} {Optimized Fidelity Estimation in Purification for the Fastest Bootstrap of a Quantum Link}},\ \href {https://aqua.sfc.wide.ad.jp/publications/takafumi\_bachelors\_thesis.pdf} {\bibinfo {type} {Bachelor's thesis}},\ \bibinfo  {school} {Keio University} (\bibinfo {year} {2018})\BibitemShut {NoStop}%
\bibitem [{\citenamefont {Sugiyama}\ \emph {et~al.}(2013)\citenamefont {Sugiyama}, \citenamefont {Turner},\ and\ \citenamefont {Murao}}]{PhysRevLett.111.160406}%
  \BibitemOpen
  \bibfield  {author} {\bibinfo {author} {\bibfnamefont {T.}~\bibnamefont {Sugiyama}}, \bibinfo {author} {\bibfnamefont {P.~S.}\ \bibnamefont {Turner}},\ and\ \bibinfo {author} {\bibfnamefont {M.}~\bibnamefont {Murao}},\ }\bibfield  {title} {\bibinfo {title} {{Precision-Guaranteed Quantum Tomography}},\ }\href {https://doi.org/10.1103/PhysRevLett.111.160406} {\bibfield  {journal} {\bibinfo  {journal} {Phys. Rev. Lett.}\ }\textbf {\bibinfo {volume} {111}},\ \bibinfo {pages} {160406} (\bibinfo {year} {2013})}\BibitemShut {NoStop}%
\bibitem [{\citenamefont {Flammia}\ and\ \citenamefont {Liu}(2011)}]{PhysRevLett.106.230501}%
  \BibitemOpen
  \bibfield  {author} {\bibinfo {author} {\bibfnamefont {S.~T.}\ \bibnamefont {Flammia}}\ and\ \bibinfo {author} {\bibfnamefont {Y.-K.}\ \bibnamefont {Liu}},\ }\bibfield  {title} {\bibinfo {title} {{Direct Fidelity Estimation from Few Pauli Measurements}},\ }\href {https://doi.org/10.1103/PhysRevLett.106.230501} {\bibfield  {journal} {\bibinfo  {journal} {Phys. Rev. Lett.}\ }\textbf {\bibinfo {volume} {106}},\ \bibinfo {pages} {230501} (\bibinfo {year} {2011})}\BibitemShut {NoStop}%
\bibitem [{\citenamefont {da~Silva}\ \emph {et~al.}(2011)\citenamefont {da~Silva}, \citenamefont {Landon-Cardinal},\ and\ \citenamefont {Poulin}}]{PhysRevLett.107.210404}%
  \BibitemOpen
  \bibfield  {author} {\bibinfo {author} {\bibfnamefont {M.~P.}\ \bibnamefont {da~Silva}}, \bibinfo {author} {\bibfnamefont {O.}~\bibnamefont {Landon-Cardinal}},\ and\ \bibinfo {author} {\bibfnamefont {D.}~\bibnamefont {Poulin}},\ }\bibfield  {title} {\bibinfo {title} {Practical characterization of quantum devices without tomography},\ }\href {https://doi.org/10.1103/PhysRevLett.107.210404} {\bibfield  {journal} {\bibinfo  {journal} {Phys. Rev. Lett.}\ }\textbf {\bibinfo {volume} {107}},\ \bibinfo {pages} {210404} (\bibinfo {year} {2011})}\BibitemShut {NoStop}%
\bibitem [{\citenamefont {Eisert}\ \emph {et~al.}(2020)\citenamefont {Eisert}, \citenamefont {Hangleiter}, \citenamefont {Walk}, \citenamefont {Roth}, \citenamefont {Markham}, \citenamefont {Parekh}, \citenamefont {Chabaud},\ and\ \citenamefont {Kashefi}}]{Eisert:2020aa}%
  \BibitemOpen
  \bibfield  {author} {\bibinfo {author} {\bibfnamefont {J.}~\bibnamefont {Eisert}}, \bibinfo {author} {\bibfnamefont {D.}~\bibnamefont {Hangleiter}}, \bibinfo {author} {\bibfnamefont {N.}~\bibnamefont {Walk}}, \bibinfo {author} {\bibfnamefont {I.}~\bibnamefont {Roth}}, \bibinfo {author} {\bibfnamefont {D.}~\bibnamefont {Markham}}, \bibinfo {author} {\bibfnamefont {R.}~\bibnamefont {Parekh}}, \bibinfo {author} {\bibfnamefont {U.}~\bibnamefont {Chabaud}},\ and\ \bibinfo {author} {\bibfnamefont {E.}~\bibnamefont {Kashefi}},\ }\bibfield  {title} {\bibinfo {title} {Quantum certification and benchmarking},\ }\href {https://doi.org/10.1038/s42254-020-0186-4} {\bibfield  {journal} {\bibinfo  {journal} {Nature Reviews Physics}\ }\textbf {\bibinfo {volume} {2}},\ \bibinfo {pages} {382} (\bibinfo {year} {2020})}\BibitemShut {NoStop}%
\bibitem [{\citenamefont {Casapao}\ \emph {et~al.}(2025{\natexlab{a}})\citenamefont {Casapao}, \citenamefont {Maity}, \citenamefont {Benchasattabuse}, \citenamefont {Hajdu{\v s}ek}, \citenamefont {Van~Meter},\ and\ \citenamefont {Elkouss}}]{Casapao:2025aa}%
  \BibitemOpen
  \bibfield  {author} {\bibinfo {author} {\bibfnamefont {J.~C.~A.}\ \bibnamefont {Casapao}}, \bibinfo {author} {\bibfnamefont {A.~G.}\ \bibnamefont {Maity}}, \bibinfo {author} {\bibfnamefont {N.}~\bibnamefont {Benchasattabuse}}, \bibinfo {author} {\bibfnamefont {M.}~\bibnamefont {Hajdu{\v s}ek}}, \bibinfo {author} {\bibfnamefont {R.}~\bibnamefont {Van~Meter}},\ and\ \bibinfo {author} {\bibfnamefont {D.}~\bibnamefont {Elkouss}},\ }\bibfield  {title} {\bibinfo {title} {Disti-mator, an entanglement distillation-based state estimator},\ }\href {https://doi.org/10.1038/s42005-025-02352-2} {\bibfield  {journal} {\bibinfo  {journal} {Communications Physics}\ }\textbf {\bibinfo {volume} {8}},\ \bibinfo {pages} {461} (\bibinfo {year} {2025}{\natexlab{a}})}\BibitemShut {NoStop}%
\bibitem [{\citenamefont {Maity}\ \emph {et~al.}(2023)\citenamefont {Maity}, \citenamefont {Casapao}, \citenamefont {Benchasattabuse}, \citenamefont {Hajdusek}, \citenamefont {Van~Meter},\ and\ \citenamefont {Elkouss}}]{10.1145/3626570.3626594}%
  \BibitemOpen
  \bibfield  {author} {\bibinfo {author} {\bibfnamefont {A.~G.}\ \bibnamefont {Maity}}, \bibinfo {author} {\bibfnamefont {J.~C.~A.}\ \bibnamefont {Casapao}}, \bibinfo {author} {\bibfnamefont {N.}~\bibnamefont {Benchasattabuse}}, \bibinfo {author} {\bibfnamefont {M.}~\bibnamefont {Hajdusek}}, \bibinfo {author} {\bibfnamefont {R.}~\bibnamefont {Van~Meter}},\ and\ \bibinfo {author} {\bibfnamefont {D.}~\bibnamefont {Elkouss}},\ }\bibfield  {title} {\bibinfo {title} {Noise estimation in an entanglement distillation protocol},\ }\href {https://doi.org/10.1145/3626570.3626594} {\bibfield  {journal} {\bibinfo  {journal} {SIGMETRICS Perform. Eval. Rev.}\ }\textbf {\bibinfo {volume} {51}},\ \bibinfo {pages} {66} (\bibinfo {year} {2023})}\BibitemShut {NoStop}%
\bibitem [{\citenamefont {Casapao}\ \emph {et~al.}(2025{\natexlab{b}})\citenamefont {Casapao}, \citenamefont {Maity}, \citenamefont {Benchasattabuse}, \citenamefont {Hajdušek}, \citenamefont {Soeda}, \citenamefont {Van~Meter},\ and\ \citenamefont {Elkouss}}]{11250296}%
  \BibitemOpen
  \bibfield  {author} {\bibinfo {author} {\bibfnamefont {J.~C.~A.}\ \bibnamefont {Casapao}}, \bibinfo {author} {\bibfnamefont {A.~G.}\ \bibnamefont {Maity}}, \bibinfo {author} {\bibfnamefont {N.}~\bibnamefont {Benchasattabuse}}, \bibinfo {author} {\bibfnamefont {M.}~\bibnamefont {Hajdušek}}, \bibinfo {author} {\bibfnamefont {A.}~\bibnamefont {Soeda}}, \bibinfo {author} {\bibfnamefont {R.}~\bibnamefont {Van~Meter}},\ and\ \bibinfo {author} {\bibfnamefont {D.}~\bibnamefont {Elkouss}},\ }\bibfield  {title} {\bibinfo {title} {A double selection entanglement distillation-based state estimator},\ }in\ \href {https://doi.org/10.1109/QCE65121.2025.00122} {\emph {\bibinfo {booktitle} {2025 IEEE International Conference on Quantum Computing and Engineering (QCE)}}},\ Vol.~\bibinfo {volume} {01}\ (\bibinfo {year} {2025})\ pp.\ \bibinfo {pages} {1089--1097}\BibitemShut {NoStop}%
\bibitem [{\citenamefont {Van~Meter}\ \emph {et~al.}(2022)\citenamefont {Van~Meter}, \citenamefont {Satoh}, \citenamefont {Benchasattabuse}, \citenamefont {Teramoto}, \citenamefont {Matsuo}, \citenamefont {Hajdu{\v s}ek}, \citenamefont {Satoh}, \citenamefont {Nagayama},\ and\ \citenamefont {Suzuki}}]{9951258}%
  \BibitemOpen
  \bibfield  {author} {\bibinfo {author} {\bibfnamefont {R.}~\bibnamefont {Van~Meter}}, \bibinfo {author} {\bibfnamefont {R.}~\bibnamefont {Satoh}}, \bibinfo {author} {\bibfnamefont {N.}~\bibnamefont {Benchasattabuse}}, \bibinfo {author} {\bibfnamefont {K.}~\bibnamefont {Teramoto}}, \bibinfo {author} {\bibfnamefont {T.}~\bibnamefont {Matsuo}}, \bibinfo {author} {\bibfnamefont {M.}~\bibnamefont {Hajdu{\v s}ek}}, \bibinfo {author} {\bibfnamefont {T.}~\bibnamefont {Satoh}}, \bibinfo {author} {\bibfnamefont {S.}~\bibnamefont {Nagayama}},\ and\ \bibinfo {author} {\bibfnamefont {S.}~\bibnamefont {Suzuki}},\ }\bibfield  {title} {\bibinfo {title} {A quantum internet architecture},\ }in\ \href {https://doi.org/10.1109/QCE53715.2022.00055} {\emph {\bibinfo {booktitle} {2022 IEEE International Conference on Quantum Computing and Engineering (QCE)}}}\ (\bibinfo {year} {2022})\ pp.\ \bibinfo {pages} {341--352}\BibitemShut {NoStop}%
\bibitem [{\citenamefont {Kozlowski}\ \emph {et~al.}(2023)\citenamefont {Kozlowski}, \citenamefont {Wehner}, \citenamefont {Van{ }Meter}, \citenamefont {Rijsman}, \citenamefont {Cacciapuoti}, \citenamefont {Caleffi},\ and\ \citenamefont {Nagayama}}]{rfc9340}%
  \BibitemOpen
  \bibfield  {author} {\bibinfo {author} {\bibfnamefont {W.}~\bibnamefont {Kozlowski}}, \bibinfo {author} {\bibfnamefont {S.}~\bibnamefont {Wehner}}, \bibinfo {author} {\bibfnamefont {R.}~\bibnamefont {Van{ }Meter}}, \bibinfo {author} {\bibfnamefont {B.}~\bibnamefont {Rijsman}}, \bibinfo {author} {\bibfnamefont {A.~S.}\ \bibnamefont {Cacciapuoti}}, \bibinfo {author} {\bibfnamefont {M.}~\bibnamefont {Caleffi}},\ and\ \bibinfo {author} {\bibfnamefont {S.}~\bibnamefont {Nagayama}},\ }\href {https://doi.org/10.17487/RFC9340} {\bibinfo {title} {{Architectural Principles for a Quantum Internet}}},\ \bibinfo {howpublished} {RFC 9340} (\bibinfo {year} {2023})\BibitemShut {NoStop}%
\bibitem [{\citenamefont {Wehner}\ \emph {et~al.}(2018)\citenamefont {Wehner}, \citenamefont {Elkouss},\ and\ \citenamefont {Hanson}}]{Wehner18:eaam9288}%
  \BibitemOpen
  \bibfield  {author} {\bibinfo {author} {\bibfnamefont {S.}~\bibnamefont {Wehner}}, \bibinfo {author} {\bibfnamefont {D.}~\bibnamefont {Elkouss}},\ and\ \bibinfo {author} {\bibfnamefont {R.}~\bibnamefont {Hanson}},\ }\bibfield  {title} {\bibinfo {title} {Quantum internet: A vision for the road ahead},\ }\bibfield  {journal} {\bibinfo  {journal} {Science}\ }\textbf {\bibinfo {volume} {362}},\ \href {https://doi.org/10.1126/science.aam9288} {10.1126/science.aam9288} (\bibinfo {year} {2018})\BibitemShut {NoStop}%
\bibitem [{\citenamefont {Krutyanskiy}\ \emph {et~al.}(2023)\citenamefont {Krutyanskiy}, \citenamefont {Canteri}, \citenamefont {Meraner}, \citenamefont {Bate}, \citenamefont {Krcmarsky}, \citenamefont {Schupp}, \citenamefont {Sangouard},\ and\ \citenamefont {Lanyon}}]{PhysRevLett.130.213601}%
  \BibitemOpen
  \bibfield  {author} {\bibinfo {author} {\bibfnamefont {V.}~\bibnamefont {Krutyanskiy}}, \bibinfo {author} {\bibfnamefont {M.}~\bibnamefont {Canteri}}, \bibinfo {author} {\bibfnamefont {M.}~\bibnamefont {Meraner}}, \bibinfo {author} {\bibfnamefont {J.}~\bibnamefont {Bate}}, \bibinfo {author} {\bibfnamefont {V.}~\bibnamefont {Krcmarsky}}, \bibinfo {author} {\bibfnamefont {J.}~\bibnamefont {Schupp}}, \bibinfo {author} {\bibfnamefont {N.}~\bibnamefont {Sangouard}},\ and\ \bibinfo {author} {\bibfnamefont {B.~P.}\ \bibnamefont {Lanyon}},\ }\bibfield  {title} {\bibinfo {title} {{Telecom-Wavelength Quantum Repeater Node Based on a Trapped-Ion Processor}},\ }\href {https://doi.org/10.1103/PhysRevLett.130.213601} {\bibfield  {journal} {\bibinfo  {journal} {Phys. Rev. Lett.}\ }\textbf {\bibinfo {volume} {130}},\ \bibinfo {pages} {213601} (\bibinfo {year} {2023})}\BibitemShut {NoStop}%
\bibitem [{\citenamefont {Pompili}\ \emph {et~al.}(2021)\citenamefont {Pompili}, \citenamefont {Hermans}, \citenamefont {Baier}, \citenamefont {Beukers}, \citenamefont {Humphreys}, \citenamefont {Schouten}, \citenamefont {Vermeulen}, \citenamefont {Tiggelman}, \citenamefont {dos Santos~Martins}, \citenamefont {Dirkse}, \citenamefont {Wehner},\ and\ \citenamefont {Hanson}}]{doi:10.1126/science.abg1919}%
  \BibitemOpen
  \bibfield  {author} {\bibinfo {author} {\bibfnamefont {M.}~\bibnamefont {Pompili}}, \bibinfo {author} {\bibfnamefont {S.~L.~N.}\ \bibnamefont {Hermans}}, \bibinfo {author} {\bibfnamefont {S.}~\bibnamefont {Baier}}, \bibinfo {author} {\bibfnamefont {H.~K.~C.}\ \bibnamefont {Beukers}}, \bibinfo {author} {\bibfnamefont {P.~C.}\ \bibnamefont {Humphreys}}, \bibinfo {author} {\bibfnamefont {R.~N.}\ \bibnamefont {Schouten}}, \bibinfo {author} {\bibfnamefont {R.~F.~L.}\ \bibnamefont {Vermeulen}}, \bibinfo {author} {\bibfnamefont {M.~J.}\ \bibnamefont {Tiggelman}}, \bibinfo {author} {\bibfnamefont {L.}~\bibnamefont {dos Santos~Martins}}, \bibinfo {author} {\bibfnamefont {B.}~\bibnamefont {Dirkse}}, \bibinfo {author} {\bibfnamefont {S.}~\bibnamefont {Wehner}},\ and\ \bibinfo {author} {\bibfnamefont {R.}~\bibnamefont {Hanson}},\ }\bibfield  {title} {\bibinfo {title} {Realization of a multinode quantum network of remote solid-state qubits},\ }\href {https://doi.org/10.1126/science.abg1919} {\bibfield  {journal}
  {\bibinfo  {journal} {Science}\ }\textbf {\bibinfo {volume} {372}},\ \bibinfo {pages} {259} (\bibinfo {year} {2021})}\BibitemShut {NoStop}%
\bibitem [{\citenamefont {Yin}\ \emph {et~al.}(2017)\citenamefont {Yin}, \citenamefont {Cao}, \citenamefont {Li}, \citenamefont {Liao}, \citenamefont {Zhang}, \citenamefont {Ren}, \citenamefont {Cai}, \citenamefont {Liu}, \citenamefont {Li}, \citenamefont {Dai}, \citenamefont {Li}, \citenamefont {Lu}, \citenamefont {Gong}, \citenamefont {Xu}, \citenamefont {Li}, \citenamefont {Li}, \citenamefont {Yin}, \citenamefont {Jiang}, \citenamefont {Li}, \citenamefont {Jia}, \citenamefont {Ren}, \citenamefont {He}, \citenamefont {Zhou}, \citenamefont {Zhang}, \citenamefont {Wang}, \citenamefont {Chang}, \citenamefont {Zhu}, \citenamefont {Liu}, \citenamefont {Chen}, \citenamefont {Lu}, \citenamefont {Shu}, \citenamefont {Peng}, \citenamefont {Wang},\ and\ \citenamefont {Pan}}]{doi:10.1126/science.aan3211}%
  \BibitemOpen
  \bibfield  {author} {\bibinfo {author} {\bibfnamefont {J.}~\bibnamefont {Yin}}, \bibinfo {author} {\bibfnamefont {Y.}~\bibnamefont {Cao}}, \bibinfo {author} {\bibfnamefont {Y.-H.}\ \bibnamefont {Li}}, \bibinfo {author} {\bibfnamefont {S.-K.}\ \bibnamefont {Liao}}, \bibinfo {author} {\bibfnamefont {L.}~\bibnamefont {Zhang}}, \bibinfo {author} {\bibfnamefont {J.-G.}\ \bibnamefont {Ren}}, \bibinfo {author} {\bibfnamefont {W.-Q.}\ \bibnamefont {Cai}}, \bibinfo {author} {\bibfnamefont {W.-Y.}\ \bibnamefont {Liu}}, \bibinfo {author} {\bibfnamefont {B.}~\bibnamefont {Li}}, \bibinfo {author} {\bibfnamefont {H.}~\bibnamefont {Dai}}, \bibinfo {author} {\bibfnamefont {G.-B.}\ \bibnamefont {Li}}, \bibinfo {author} {\bibfnamefont {Q.-M.}\ \bibnamefont {Lu}}, \bibinfo {author} {\bibfnamefont {Y.-H.}\ \bibnamefont {Gong}}, \bibinfo {author} {\bibfnamefont {Y.}~\bibnamefont {Xu}}, \bibinfo {author} {\bibfnamefont {S.-L.}\ \bibnamefont {Li}}, \bibinfo {author} {\bibfnamefont {F.-Z.}\ \bibnamefont {Li}}, \bibinfo {author}
  {\bibfnamefont {Y.-Y.}\ \bibnamefont {Yin}}, \bibinfo {author} {\bibfnamefont {Z.-Q.}\ \bibnamefont {Jiang}}, \bibinfo {author} {\bibfnamefont {M.}~\bibnamefont {Li}}, \bibinfo {author} {\bibfnamefont {J.-J.}\ \bibnamefont {Jia}}, \bibinfo {author} {\bibfnamefont {G.}~\bibnamefont {Ren}}, \bibinfo {author} {\bibfnamefont {D.}~\bibnamefont {He}}, \bibinfo {author} {\bibfnamefont {Y.-L.}\ \bibnamefont {Zhou}}, \bibinfo {author} {\bibfnamefont {X.-X.}\ \bibnamefont {Zhang}}, \bibinfo {author} {\bibfnamefont {N.}~\bibnamefont {Wang}}, \bibinfo {author} {\bibfnamefont {X.}~\bibnamefont {Chang}}, \bibinfo {author} {\bibfnamefont {Z.-C.}\ \bibnamefont {Zhu}}, \bibinfo {author} {\bibfnamefont {N.-L.}\ \bibnamefont {Liu}}, \bibinfo {author} {\bibfnamefont {Y.-A.}\ \bibnamefont {Chen}}, \bibinfo {author} {\bibfnamefont {C.-Y.}\ \bibnamefont {Lu}}, \bibinfo {author} {\bibfnamefont {R.}~\bibnamefont {Shu}}, \bibinfo {author} {\bibfnamefont {C.-Z.}\ \bibnamefont {Peng}}, \bibinfo {author} {\bibfnamefont {J.-Y.}\
  \bibnamefont {Wang}},\ and\ \bibinfo {author} {\bibfnamefont {J.-W.}\ \bibnamefont {Pan}},\ }\bibfield  {title} {\bibinfo {title} {Satellite-based entanglement distribution over 1200 kilometers},\ }\href {https://doi.org/10.1126/science.aan3211} {\bibfield  {journal} {\bibinfo  {journal} {Science}\ }\textbf {\bibinfo {volume} {356}},\ \bibinfo {pages} {1140} (\bibinfo {year} {2017})}\BibitemShut {NoStop}%
\bibitem [{\citenamefont {Pfaff}\ \emph {et~al.}(2014)\citenamefont {Pfaff}, \citenamefont {Hensen}, \citenamefont {Bernien}, \citenamefont {van Dam}, \citenamefont {Blok}, \citenamefont {Taminiau}, \citenamefont {Tiggelman}, \citenamefont {Schouten}, \citenamefont {Markham}, \citenamefont {Twitchen},\ and\ \citenamefont {Hanson}}]{Pfaff29052014}%
  \BibitemOpen
  \bibfield  {author} {\bibinfo {author} {\bibfnamefont {W.}~\bibnamefont {Pfaff}}, \bibinfo {author} {\bibfnamefont {B.}~\bibnamefont {Hensen}}, \bibinfo {author} {\bibfnamefont {H.}~\bibnamefont {Bernien}}, \bibinfo {author} {\bibfnamefont {S.~B.}\ \bibnamefont {van Dam}}, \bibinfo {author} {\bibfnamefont {M.~S.}\ \bibnamefont {Blok}}, \bibinfo {author} {\bibfnamefont {T.~H.}\ \bibnamefont {Taminiau}}, \bibinfo {author} {\bibfnamefont {M.~J.}\ \bibnamefont {Tiggelman}}, \bibinfo {author} {\bibfnamefont {R.~N.}\ \bibnamefont {Schouten}}, \bibinfo {author} {\bibfnamefont {M.}~\bibnamefont {Markham}}, \bibinfo {author} {\bibfnamefont {D.~J.}\ \bibnamefont {Twitchen}},\ and\ \bibinfo {author} {\bibfnamefont {R.}~\bibnamefont {Hanson}},\ }\bibfield  {title} {\bibinfo {title} {Unconditional quantum teleportation between distant solid-state quantum bits},\ }\bibfield  {journal} {\bibinfo  {journal} {Science}\ }\href {https://doi.org/10.1126/science.1253512} {10.1126/science.1253512} (\bibinfo {year}
  {2014})\BibitemShut {NoStop}%
\bibitem [{\citenamefont {Cuomo}\ \emph {et~al.}(2020)\citenamefont {Cuomo}, \citenamefont {Caleffi},\ and\ \citenamefont {Cacciapuoti}}]{Cuomo:2020aa}%
  \BibitemOpen
  \bibfield  {author} {\bibinfo {author} {\bibfnamefont {D.}~\bibnamefont {Cuomo}}, \bibinfo {author} {\bibfnamefont {M.}~\bibnamefont {Caleffi}},\ and\ \bibinfo {author} {\bibfnamefont {A.~S.}\ \bibnamefont {Cacciapuoti}},\ }\bibfield  {title} {\bibinfo {title} {Towards a distributed quantum computing ecosystem},\ }\href {https://doi.org/https://doi.org/10.1049/iet-qtc.2020.0002} {\bibfield  {journal} {\bibinfo  {journal} {IET Quantum Communication}\ }\textbf {\bibinfo {volume} {1}},\ \bibinfo {pages} {3} (\bibinfo {year} {2020})}\BibitemShut {NoStop}%
\bibitem [{\citenamefont {Huang}\ \emph {et~al.}(2020)\citenamefont {Huang}, \citenamefont {Kueng},\ and\ \citenamefont {Preskill}}]{Huang:2020aa}%
  \BibitemOpen
  \bibfield  {author} {\bibinfo {author} {\bibfnamefont {H.-Y.}\ \bibnamefont {Huang}}, \bibinfo {author} {\bibfnamefont {R.}~\bibnamefont {Kueng}},\ and\ \bibinfo {author} {\bibfnamefont {J.}~\bibnamefont {Preskill}},\ }\bibfield  {title} {\bibinfo {title} {Predicting many properties of a quantum system from very few measurements},\ }\href {https://doi.org/10.1038/s41567-020-0932-7} {\bibfield  {journal} {\bibinfo  {journal} {Nature Physics}\ }\textbf {\bibinfo {volume} {16}},\ \bibinfo {pages} {1050} (\bibinfo {year} {2020})}\BibitemShut {NoStop}%
\end{thebibliography}%

\end{document}